\newtheorem{theo}{Theorem}[section]
\newtheorem{prop}[theo]{Proposition}
\newtheorem{cor}[theo]{Corollary}
\newtheorem{lemma}[theo]{Lemma}
\theoremstyle{definition}
\newtheorem{defi}[theo]{Definition}
\newtheorem{exa}[theo]{Example}
\newtheorem{rem}[theo]{Remark}
\numberwithin{equation}{section}
\newcommand{\N}{{\mathbb N}}
\newcommand{\F}{{\mathbb F}}
\newcommand{\cC}{{\mathcal C}}
\newcommand{\cD}{{\mathcal D}}
\newcommand{\cM}{{\mathcal M}}
\newcommand{\cT}{{\mathcal T}}
\newcommand{\cI}{{\mathcal I}}
\newcommand{\cR}{{\mathcal R}}
\newcommand{\cS}{{\mathcal S}}
\newcommand{\T}{\mbox{$\!^{\sf T}$}}
\newcommand{\ideal}[1]{\mbox{$({#1})$}}
\newcommand{\lideal}[1]{\mbox{$^{\bullet}({#1})$}}
\newcommand{\rideal}[1]{\mbox{$({#1})^{\bullet}$}}
\newcommand{\Aut}{\mbox{\rm Aut}}
\newcommand{\Fix}{\mbox{\rm Fix}}
\newcommand{\vv}{{\frak{v}}}
\newcommand{\ppa}{{\frak{p}_a}}
\newcommand{\vva}{{\frak{v}_a}}
\newcommand{\coset}[1]{\overline{#1}}
\newcommand{\rmid}{\mbox{$\,|_{r}\,$}}
\newcommand{\rank}{{\rm rk}}
\newcommand{\gcrd}{{\rm gcrd}}
\newcommand{\lclm}{{\rm lclm}}
\newcommand{\im}{{\rm im}\,}
\newcommand{\CircMatat}{\mbox{$M_a^{\theta}$}}
\newcommand{\CircMat}[1]{\mbox{$M_{#1}^{\theta}$}}
\newcounter{alp}
\newcounter{ara}
\newcounter{rom}
\newenvironment{romanlist}{\begin{list}{(\roman{rom})\hfill}{\usecounter{rom}
     \topsep0ex \labelwidth.7cm \leftmargin.7cm \labelsep0cm
     \rightmargin0cm \parsep0ex \itemsep.4ex
     \partopsep1ex}}{\end{list}}
\newenvironment{alphalist}{\begin{list}{(\alph{alp})\hfill}{\usecounter{alp}
     \topsep0ex \labelwidth.7cm \leftmargin.7cm \labelsep0cm
     \rightmargin0cm \parsep0ex \itemsep.5ex
     \partopsep0ex}}{\end{list}}
\newenvironment{arabiclist}{\begin{list}{(\arabic{ara})\hfill}{\usecounter{ara}
     \topsep0ex \labelwidth.7cm \leftmargin.7cm \labelsep0cm
     \rightmargin0cm \parsep0ex \itemsep.5ex
     \partopsep0ex}}{\end{list}}
\begin{document}
\title{A Circulant Approach to Skew-Constacyclic Codes}
\date{\today}
\author{Neville Fogarty and Heide Gluesing-Luerssen\footnote{HGL was partially supported by the National
Science Foundation Grant DMS-1210061.
Both authors are with the Department of Mathematics, University of Kentucky, Lexington KY 40506-0027, USA;
\{neville.fogarty,heide.gl\}@uky.edu.}}

\maketitle

{\bf Abstract:}
We introduce circulant matrices that capture the structure of a skew-poly\-nomial ring $\F[x;\theta]$
modulo the left ideal generated by a polynomial of the type~$x^n-a$.
This allows us to develop an approach to skew-constacyclic codes based on such circulants.
Properties of these circulants are derived, and in particular it is shown that the transpose of a
certain circulant is a circulant again.
This recovers the well-known result that the dual of a skew-constacyclic code is a constacyclic code again.
Special attention is paid to the case where $x^n-a$ is central.

{\bf Keywords:} Linear block codes, skew-cyclic codes, skew-polynomial rings, circulants

{\bf MSC (2010):} 11T71, 16S36, 94B05

\section{Introduction}\label{S-Intro}
Cyclic block codes form the most powerful class of linear block codes due to their inherent algebraic structure
which allows the design of codes with large distance and efficient decoding algorithms.
In recent years the notion of cyclicity has been generalized to skew-cyclicity, mainly in the work by
Boucher/Ulmer and coworkers, see \cite{BGU07,BoUl09,CLU09,BoUl11,BoUl14}, but also by
Abualrub et al.~\cite{AGAS10}, Matsuoka~\cite{Mat10}, and Gao et al.~\cite{GSF13}.

These codes are defined and studied with the aid of skew-polynomial rings.
These are rings of the form~$\F[x;\theta]$ or even~$\F[x;\theta,\delta]$ with an automorphism~$\theta$ and a
$\theta$-derivation~$\delta$, and where~$\theta$ and~$\delta$ describe the relation between $ax$ and~$xa$ for
coefficients~$a\in\F$.
They were introduced by Ore~\cite{Ore33} in 1933.
It is interesting to observe that, beyond the area of skew-constacyclic codes, skew-polynomial rings over finite fields
have gained considerable attention in recent years in coding theory, shift-register synthesis, and cryptography;
see for instance \cite{LMK14,SJB11,SiBo14,BGGU10,Zha10,Wu14}.

In the papers mentioned in the first paragraph, most notably~\cite{BoUl09,BoUl11,BoUl14}, an algebraic theory of skew-constacyclic codes
has been developed.
It generalizes -- to a large extent -- the classical algebraic theory of cyclic codes.
For instance, a central result in~\cite{BoUl11} is that the dual code of a skew-constacyclic code is again
skew-constacyclic.

In~\cite{BGU07,CLU09} the authors present skew-constacyclic codes whose distance improves upon the largest
distance that was known at that time for codes with the same parameters $(q,n,k)$.
In~\cite{AGAS10} the same is done using skew quasi-cyclic codes.
In~\cite{BoUl14a} some self-dual skew-constacyclic codes are found that have better distance than previously known
self-dual codes with the same parameters.
All of this suggests that the class of skew-constacyclic codes has some promising potential.
One reason for this may be that in skew-polynomial rings $\F_q[x;\theta]$, polynomials do not
factor uniquely into irreducibles and therefore often have a large number of (right) divisors.
As a consequence, one obtains plenty of skew-constacyclic codes.
The latter are defined as the submodules generated by right divisors of some $x^n-a$ in the left module
of skew-polynomials $\cR:=\F[x;\theta]$ modulo the left ideal generated by~$x^n-a$.

In this paper we will develop an approach to skew-constacyclic codes with the aid of suitably defined circulant matrices,
thereby rediscovering the above duality result.

A circulant description of classical cyclic codes is well known (see for instance~\cite[p.~501]{MS77}).
In that case, the circulant associated with a polynomial~$g$ is a square matrix whose $i$-th row contains the
coefficients of $x^ig$ modulo $x^n-1$ for $i=0,\ldots,n-1$.
In our context, circulants are matrices where the rows are the lists of left coefficients of the left multiples
$x^ig\in\cR$ modulo~$\cR(x^n-a)$.
We will show that if~$g$ is a right divisor of $x^n-a$, then the transpose of its circulant is, up to reordering and rescaling of its rows,
the circulant of a right divisor of $x^n-c$ for a particular constant~$c=c(a,g)$.
Since the row space of the circulant is the skew-constacyclic code generated by~$g$, this result will recover the duality
theorem proven by Boucher/Ulmer in~\cite{BoUl11}.

Furthermore, with the aid of a particular product formula for circulants we obtain anti-isomorphisms between
the lattice of right divisors of~$x^n-a$, the lattice of right divisors of $x^n-a^{-1}$, the lattice of skew-constacyclic
codes in~$\F^n$ and the lattice of dual codes.
These results can be derived despite the fact that the theory of circulants does not entirely generalize from the classical case
to the skew-polynomial case.
For instance, in general products of circulants are not circulants and neither are their transposes.
Only for right divisors of~$x^n-a$ can the necessary relations be obtained.

Finally, special attention will be paid to the case where the left ideal $\cR(x^n-a)$ is a two-sided ideal.
In this case the circulants form a subring of $\F^{n\times n}$ which is isomorphic to the quotient ring  $\cR/\cR(x^n-a)$.
As a consequence, the theory nicely generalizes the commutative case, as it can be found in, ~e.g.,~\cite[p.~501]{MS77}.
This is in stark contrast to the general case, in which general circulants satisfy only few properties, as we pointed out above.

\section{Preliminaries}\label{S-Prelim}

Let~$\F$ be a finite field and $\theta\in\Aut(\F)$, that is,~$\theta$ is an automorphism of~$\F$.
We consider the skew polynomial ring $\cR:=\F[x;\theta]$, which is defined as the set
$\{\sum_{i=0}^n a_i x^i\mid n\in\N_0,\,a_i\in\F\}$ endowed with the usual addition, and where multiplication
is given by
\[
   xa=\theta(a)x\text{ for all }a\in\F
\]
together with the laws of associativity and distributivity.
Then~$\cR$ is a ring with identity which is non-commutative unless $\theta=\text{id}_{\F}$.
Following Boucher/Ulmer~\cite{BoUl09}, we call~$\cR$ a \emph{skew-polynomial ring of automorphism type}.
Despite the non-commutativity, the ring is very similar to ordinary polynomial rings over fields.
Some well-known properties are summarized below.
Note that the degree of a polynomial $f\in\cR$, denoted by $\deg(f)$, does not depend on the side where we collect
the coefficients of~$f$ since~$\theta$ is an automorphism.
We also define $\deg(0)=-\infty$.
Then we have the usual degree formulas, and in particular~$\cR$ is a domain.
It is easy to see that the center of~$\cR$ is given by
\begin{equation}\label{e-center}
   Z(\cR)=\widehat{\F}[x^m],\ \text{ where }|\theta|=m,
\end{equation}
and $\widehat{\F}:=\Fix_{\F}(\theta)$ is the fixed field of~$\theta$.

\begin{rem}[\mbox{\cite{Ore33}}]\label{R-PropR}
$\cR$ is a left Euclidean domain and a right Euclidean domain.
More precisely, we have the following.
\begin{alphalist}
\item (Right division with remainder) For all $f,\,g\in\cR$ with $g\neq0$ there exist unique polynomials $s,r\in\cR$ such that
      $f=sg+r$ and $\deg(r)<\deg(g)$.
      If $r=0$, then~$g$ is a \emph{right divisor} of~$f$, denoted by $g\rmid f$.
\item For any two polynomials $f_1,\,f_2\in\cR$, not both zero, there exists a unique monic polynomial~$d\in\cR$
      such that $d\rmid f_1,\ d\rmid f_2$ and such that whenever $h\in\cR$ satisfies
      $h\rmid f_1$ and $h\rmid f_2$ then $h\rmid d$.
      The polynomial~$d$ is called the \emph{greatest common right divisor} of $f_1$ and~$f_2$, denoted by $\gcrd(f_1,f_2)$.
      It satisfies a \emph{right Bezout identity}, that is,
      \[
          d=uf_1+vf_2\ \text{ for some }u,\,v\in\cR.
      \]
      We may choose $u,\,v$ such that $\deg(u)<\deg(f_2)$ and, consequently, $\deg(v)<\deg(f_1)$; see~\cite[Sec.~2]{Gie98}.
\item For any two nonzero polynomials $f_1,\,f_2\in\cR$, there exists a unique monic polynomial~$\ell\in\cR$
      such that $f_i\rmid\ell,\,i=1,2,$ and such that whenever $h\in\cR$ satisfies
      $f_i\rmid h,\,i=1,2,$ then $\ell\rmid h$.
      The polynomial~$\ell$ is called the \emph{least common left multiple} of $f_1$ and~$f_2$, denoted by $\lclm(f_1,f_2)$.
      Moreover, we have $\ell=uf_1=vf_2$ for some $u,\,v\in\cR$ with $\deg(u)\leq\deg(f_2)$ and $\deg(v)\leq\deg(f_1)$;
      this follows from \cite[Thm.~8 and Eq.~(24)]{Ore33}.
\item For all nonzero $f_1,\,f_2\in\cR$
      \[
           \deg(\gcrd(f_1,f_2))+\deg(\lclm(f_1,f_2))=\deg(f_1)+\deg(f_2).
      \]
\end{alphalist}
Analogous statements hold true for the left hand side.
\end{rem}

Let now $a\in\F^*:=\F\backslash\{0\}$ and $n\in\N$.
Throughout this paper we will be concerned with the quotient module
\[
       \cS_a:=\cR/\lideal{x^n-a},
\]
where $\lideal{x^n-a}:=\cR(x^n-a)$ denotes the principal left ideal generated by~$x^n-a$.
Note that in general~$\cS_a$ is not a ring, but simply a left~$\cR$-module.
This naturally induces a left $\F$-vector space structure as well.

The coset $f+\cR(x^n-a)$ of $f\in\cR$ will be denoted by $\coset{f}$.
The left $\cR$-module structure implies $t\,\coset{f}=\coset{tf}$ for any $t,\,f\in\cR$.
From right division with remainder it is clear that every coset in~$\cS_a$ has a unique representative
of degree less than~$n$.

Occasionally we will pay special attention to the case where~$\cS_a$ is a ring.
\begin{rem}\label{R-twosided}
An element~$f\in\cR$ is called \emph{two-sided} if $\cR f=f\cR$.
In this case the left ideal $\cR f$ is even two-sided and thus $\cR/\cR f$ is a ring.
It is not hard to see \cite[Thm.~1.1.22]{Jac96} that the two-sided elements of~$\cR$ are exactly the skew-polynomials of the form
$cx^tf$, where~$c\in\F$ and $t\in\N_0$, and~$f$ is in the center~$Z(\cR)$.
In particular, a polynomial of the form $x^n-a$, where~$a\neq0$, is two-sided if and only if it is central and this is the case if and only if~$|\theta|$ divides~$n$ and $a\in\Fix_{\F}(\theta)$.
Only in this case is the module $\cS_a=\cR/\lideal{x^n-a}$ a ring.
\end{rem}

Let us return to the general case.
The module~$\cS_a$ is the skew-constacyclic analogue of the quotient ring $\F[x]/\ideal{x^n-1}$ for cyclic codes or,
more generally, of $\F[x]/\ideal{x^n-a}$ for constacyclic codes.
We have the left $\F$-linear isomorphism
\begin{equation}\label{e-ppa}
  \ppa:\,\F^n\longrightarrow \cS_a,\ (c_0,\ldots,c_{n-1})\longmapsto \coset{\sum_{i=0}^{n-1} c_i x^i}.
\end{equation}
It is crucial that the coefficients~$c_i$ appear on the left of~$x$, because only this turns~$\ppa$ into an
isomorphism of (left) $\F$-vector spaces.
This map will relate codes in~$\F^n$ to submodules in~$\cS_a$.
We set
\begin{equation}\label{e-vva}
  \vva:=\ppa^{-1}.
\end{equation}

The following facts about submodules of~$\cS_a$ are straightforward generalizations of the commutative case and
are proven in exactly the same way (with the aid of Remark~\ref{R-PropR}).
Just as for left ideals we use the notation $\lideal{\coset{g}}$ for the left submodule of~$\cS_a$ generated by $\coset{g}$.

\begin{prop}\label{P-submoduleS}
Let~$\cM$ be a left submodule of~$\cS_a$.
\begin{arabiclist}
\item Then~$\cM=\lideal{\coset{g}}$, where~$g\in\cR$ is the unique monic polynomial of smallest degree such that
      $\coset{g}\in\cM$.
      Moreover,
      \begin{romanlist}
      \item $g\rmid f$ for any $f\in\cR$ such that $\coset{f}\in\cM$.
            In particular, $g\rmid (x^n-a)$.
      \item $g$ is the unique monic right divisor of~$x^n-a$ such that $\lideal{\coset{g}}=\cM$.
      \end{romanlist}
\item Let~$f\in\cR$. Then $\lideal{\coset{f}}=\lideal{\coset{g}}$, where $g=\gcrd(f,\,x^n-a)$.
\end{arabiclist}
\end{prop}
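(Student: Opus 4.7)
The plan is to reduce both statements to the fact that $\cR$ is a principal left ideal domain, via the standard correspondence between left submodules of $\cS_a=\cR/\cR(x^n-a)$ and left ideals of $\cR$ containing $\cR(x^n-a)$. Let $\pi:\cR\to\cS_a$ denote the canonical projection $f\mapsto\coset{f}$. Because $\pi$ is a homomorphism of left $\cR$-modules, the preimage $I:=\pi^{-1}(\cM)$ is a left ideal of $\cR$, and it contains $\ker\pi=\cR(x^n-a)$, so in particular $I\neq\{0\}$.

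For part~(1) I would argue as follows. Pick a monic polynomial $g\in I$ of smallest degree; such a~$g$ exists because $x^n-a\in I$ and we may rescale any nonzero element of minimal degree by its leading coefficient. Given any $f\in I$, right-divide $f=sg+r$ with $\deg(r)<\deg(g)$ using Remark~\ref{R-PropR}(a). Since $I$ is a left ideal, $r=f-sg\in I$, and minimality of $\deg(g)$ forces $r=0$. Hence $I=\cR g$, and applying $\pi$ gives $\cM=\lideal{\coset{g}}$. Statement~(i) is then immediate: if $\coset{f}\in\cM$ then $f\in I=\cR g$, i.e.\ $g\rmid f$, and taking $f=x^n-a\in I$ in particular yields $g\rmid(x^n-a)$. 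For (ii), suppose $g'$ is another monic right divisor of $x^n-a$ with $\lideal{\coset{g'}}=\cM$; then $\coset{g'}\in\cM$ gives $g\rmid g'$ by (i), and symmetrically $g'\rmid g$, so~$g$ and~$g'$ are monic associates in the domain $\cR$ and must coincide. Uniqueness of~$g$ as the monic polynomial of minimal degree in $I$ follows by the same symmetric argument.

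For part~(2), set $g=\gcrd(f,x^n-a)$. Since $g\rmid f$, we have $f\in\cR g$, so $\coset{f}\in\lideal{\coset{g}}$ and therefore $\lideal{\coset{f}}\subseteq\lideal{\coset{g}}$. For the reverse inclusion I would invoke the right Bezout identity from Remark~\ref{R-PropR}(b) to write $g=uf+v(x^n-a)$ for some $u,v\in\cR$; passing to~$\cS_a$ gives $\coset{g}=u\coset{f}\in\lideal{\coset{f}}$, hence $\lideal{\coset{g}}\subseteq\lideal{\coset{f}}$.

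The argument is essentially the classical one and I do not anticipate a substantive obstacle. The only point requiring care is the consistent use of \emph{left} ideals together with \emph{right} division with remainder: because of the non-commutativity of~$\cR$, the relation $f\in\cR g$ corresponds to $g\rmid f$ (right divisibility) rather than to the opposite sided statement, and the Bezout identity employed in part~(2) must be the right-sided one supplied by Remark~\ref{R-PropR}(b). Once these sides are lined up correctly, every step proceeds exactly as in the commutative setting.
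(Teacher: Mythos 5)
Your proof is correct and is precisely the argument the paper has in mind: the paper omits an explicit proof, stating only that these facts are ``straightforward generalizations of the commutative case and are proven in exactly the same way (with the aid of Remark~\ref{R-PropR})'', and your pullback to a left ideal of $\cR$, right division with remainder to get principality, and the right Bezout identity for part~(2) is exactly that standard argument with the sides handled correctly.
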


We mention in passing that in the central case (see Remark~\ref{R-twosided}) the ring~$\cS_a$ is Frobenius.
This is a trivial consequence of the fact that~$\cS_a$ is finite and by Proposition~\ref{P-submoduleS}(1)
a principal left ideal ring; see \cite[Th.~1]{Hon01}.

Let us turn to the general case again.
The following is now immediate.
We use the notation $\im(M)$ for the rowspace of a matrix~$M$.
\begin{cor}[see also \mbox{\cite{BoUl09}}]\label{C-Mbasis}
Let~$g\in\cR$ be a right divisor of $x^n-a$, and let $\deg(g)=r$. Set $\cM:=\lideal{\coset{g}}$.
Then~$\cM$ is a left~$\F$-vector space of dimension~$k:=n-r$ with basis
$\{\coset{g},\,\coset{xg},$ $\ldots,\,\coset{x^{k-1}g}\}$.
Writing $g=\sum_{i=0}^r g_i x^i$, we conclude
\[
  \vva(\cM)=\im(M),
\]
where
\begin{equation}\label{e-M}
  M=\begin{pmatrix}\vva(\coset{g})\\ \vva(\coset{xg})\\ \vdots\\ \vva(\coset{x^{k-1}g})\end{pmatrix}
  =\begin{pmatrix}g_0&g_1        &\cdots       &g_r     &        &   &\\
                    &\theta(g_0)&\theta(g_1)&\cdots&\theta(g_r)&   &\\
                    &           &\ddots     &\ddots   &     &\ddots&\\
                    &           &           &\!\!\theta^{k-1}(g_0)&\!\!\theta^{k-1}(g_1)&\cdots&\theta^{k-1}(g_r)
  \end{pmatrix}\in\F^{k\times n}.
\end{equation}
\end{cor}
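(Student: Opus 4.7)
The plan is to establish the three claims in sequence: that the listed cosets span $\cM$, that they are left-$\F$-linearly independent, and that applying $\vva$ produces the asserted rows of $M$. Since $g$ is a monic right divisor of $x^n-a$, write $x^n-a=hg$ with $\deg(h)=n-r=k$. For the spanning property, take an arbitrary element of $\cM=\lideal{\coset{g}}$, which has the form $\coset{fg}$ for some $f\in\cR$. Applying right division with remainder (Remark~\ref{R-PropR}(a)) to divide $f$ by $h$, I write $f=qh+s$ with $\deg(s)<k$, so that $fg=q(x^n-a)+sg$ and thus $\coset{fg}=\coset{sg}$. Expanding $s=\sum_{i=0}^{k-1}s_ix^i$ with coefficients on the left yields $\coset{sg}=\sum_{i=0}^{k-1}s_i\coset{x^ig}$, proving the spanning claim.

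Next I would verify linear independence. Suppose $\sum_{i=0}^{k-1}c_i\coset{x^ig}=0$, meaning $\sum_{i=0}^{k-1}c_ix^ig=q(x^n-a)$ for some $q\in\cR$. Since $g$ has degree $r$, the left-hand side has degree at most $k-1+r=n-1$. If $q\neq 0$ then the right-hand side has degree at least $n$, a contradiction; hence $q=0$. Because $\cR$ is a domain and $g\neq 0$, it follows that $\sum_{i=0}^{k-1}c_ix^i=0$, so all $c_i$ vanish. Together with the spanning step this shows $\dim_{\F}\cM=k$ with the asserted basis.

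For the matrix form, I would use the defining commutation rule $xa=\theta(a)x$ iteratively to obtain $x^ib=\theta^i(b)x^i$ for every $b\in\F$, and thus
\[
 x^ig=\sum_{j=0}^r\theta^i(g_j)x^{i+j}.
\]
Provided $0\leq i\leq k-1$, the degree $i+r\leq n-1$, so this already is the unique representative of $\coset{x^ig}$ of degree less than~$n$. Applying $\vva$ therefore returns the row vector whose entries in positions $i,i+1,\ldots,i+r$ are $\theta^i(g_0),\theta^i(g_1),\ldots,\theta^i(g_r)$ and whose remaining entries are zero; stacking these rows for $i=0,\ldots,k-1$ produces exactly the matrix $M$ displayed in~\eqref{e-M}, and its row space is $\vva(\cM)$.

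There is no real obstacle here: every step follows directly from the division property of Remark~\ref{R-PropR}(a), the fact that $\cR$ is a domain, and the commutation rule defining $\cR$. The only point requiring vigilance is the consistent use of \emph{left} coefficients, since the map $\ppa$ in~\eqref{e-ppa} is only $\F$-linear for that convention, and it is precisely this choice that causes the twists $\theta^i(g_j)$ to appear along the rows of~$M$ rather than bare coefficients.
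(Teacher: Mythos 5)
Your proof is correct and follows essentially the same route as the paper: the spanning argument via right division of $f$ by the cofactor $h$ is identical, and you merely make explicit the linear independence (via a degree argument rather than reading it off the echelon shape of $M$) and the computation of the rows of $M$, both of which the paper treats as clear.
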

\begin{proof}
Let $hg=x^n-a$. Consider $f\coset{g}\in\cM$.
Division with remainder of~$f$ by~$h$ yields $f=th+s$ for some $t,s\in\cR$ with $\deg(s)<\deg(h)=k$.
Then $\coset{fg}=\coset{thg+sg}=\coset{sg}$, and the latter is in the span of
$\{\coset{g},\,\coset{xg},\ldots,\coset{x^{k-1}g}\}$.
Linear independence is clear from the matrix~$M$.
\end{proof}

We close this section with the definition of $(\theta,a)$-constacyclicity and an illustrating example.
The definition is a special case of \cite[Def.~1]{BoUl09}.
\begin{defi}\label{D-CBC}
A subspace $\cC\subseteq\F^n$ is called $(\theta,a)$-\emph{constacyclic} if
$\ppa(\cC)$ is a submodule of~$\cS_a$.
The code~$\cC\subseteq\F^n$ is called \emph{skew-constacyclic} if it is $(\theta,a)$-constacyclic for some $\theta\in\Aut(\F)$
and $a\in\F^*$.
The code is called $\theta$-\emph{cyclic} if it is $(\theta,1)$-\emph{constacyclic}.
\end{defi}
It is easy to see~\cite[Sec.~2]{BoUl11} that a subspace~$\cC\subseteq\F^n$ is $(\theta,a)$-constacyclic if and only if
\begin{equation}\label{e-shift}
   (f_0,\ldots,f_{n-1})\in\cC\Longrightarrow (a\theta(f_{n-1}),\theta(f_0),\ldots,\theta(f_{n-2}))\in\cC.
\end{equation}

It is an immediate consequence of Proposition~\ref{P-submoduleS}(1) that if a subspace~$\cC\subseteq\F^n$, where $\{0\}\subsetneq\cC\subsetneq\F^n$,
is $(\theta,a)$-constacyclic and $(\theta,b)$-constacyclic, then $a=b$.
Furthermore, a $(\theta,a)$-constacyclic code has a generator matrix of the form~$M$ as in~\eqref{e-M}.
It is interesting to note that this matrix does not depend on~$a$.
The dependence on~$a$ materializes only through the fact that the code~$\im M$ is $(\theta,a)$-constacyclic, see~\eqref{e-shift}.
Indeed, let $\cC=\im M$, where $M\in\F^{k\times n}$ has a form as in~\eqref{e-M}, and without loss of generality assume $g_r=1$.
Let~$g:=\sum_{i=0}^r g_i x^i$.
The form of the matrix implies $r=n-k$.
Moreover, it shows that~$g$ is the unique monic polynomial of smallest degree in $\ppa(\cC)$.
As a consequence, Proposition~\ref{P-submoduleS}(1) implies that~$\cC$ is $(\theta,a)$-constacyclic if and only if
the polynomial~$g$ is a right divisor of $x^n-a$ of degree $n-k$.

Proposition~\ref{P-submoduleS} tells us that,
as in the classical commutative case, the $(\theta,a)$-constacyclic codes in~$\F^n$ are in bijection with the
distinct monic right divisors of~$x^n-a$.
However, as is well known, skew-polynomials do not factor uniquely into irreducible polynomials (but see also \cite[Thm.~1, Page~494]{Ore33}),
which often results in a large number of right divisors.
We provide the following small example, which will be used again in later sections.

\begin{exa}\label{E-FactorsXna}
Consider the field~$\F_8=\F_2[\alpha]$, where $\alpha^3=\alpha+1$, and let~$\theta$ be the Frobenius homomorphism on~$\F_8$,
thus $\theta(c)=c^2$ for all $c\in\F_{8}$.
Let $f:=x^7+\alpha$.
With the aid of an exhaustive search one finds that~$f$ has the monic right divisors
\begin{align*}
  &g^{(0)}=1,\quad g^{(1)}=x+\alpha,\quad g^{(2)}=x^3+\alpha^4 x^2+1,\quad g^{(3)}=x^3+\alpha^6x+1,\\[.6ex]
  &g^{(4)}=x^4+\alpha x^3+\alpha^5x^2+\alpha,\quad g^{(5)}=x^4+\alpha^5x^2+x+\alpha,\\[.6ex]
  &g^{(6)}=x^6+\alpha^4 x^5+\alpha^6 x^4+x^3+\alpha^4 x^2+\alpha^6x+1,\quad g^{(7)}=x^7+\alpha.
\end{align*}
The polynomials $g^{(2)},\,g^{(3)},\,g^{(6)}$ are not left divisors of $x^7+\alpha$, while all others are.
Moreover, we have the lattice shown in Figure~\ref{F-Lattices1} with respect to right division, which in turn provides
us with the lattice of the $(\theta,\alpha)$-constacyclic codes $\cC^{(i)}:=\vva\big(\lideal{\coset{g^{(i)}}}\big)$
in~$\F_8^7$ with respect to inclusion.
\begin{figure}[ht]
\centering
  \mbox{}\hspace*{1cm}
  \includegraphics[height=5cm]{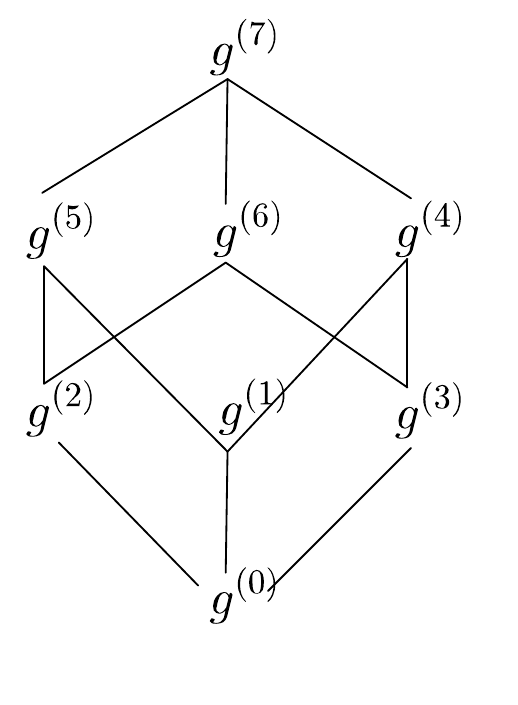}
         \hspace*{2cm}\includegraphics[height=5cm]{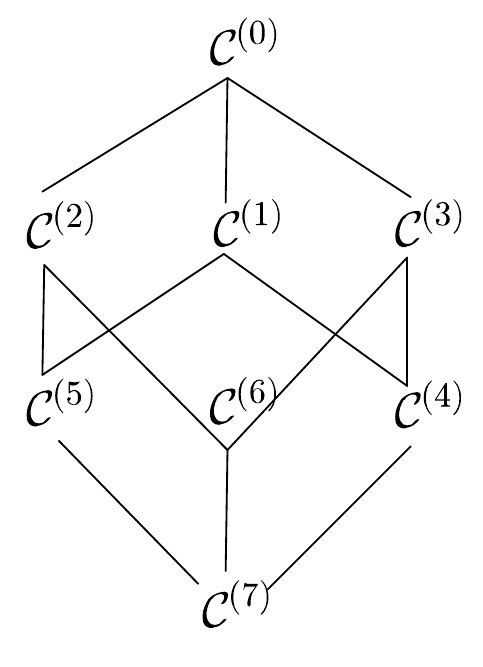}
    \caption{Lattice of monic right divisors of $x^7+\alpha$ and the corresponding codes}
    \label{F-Lattices1}
\end{figure}
\\
This means, for instance, that $g^{(1)}$ is a right divisor of~$g^{(5)}$ and thus $\cC^{(5)}\subseteq\cC^{(1)}$.
The latter implies that $(\cC^{(1)})^{\perp}\subseteq(\cC^{(5)})^{\perp}$.
The lattice of right divisors (in a suitable skew polynomial ring) corresponding to the dual codes
will be provided in Section~\ref{S-lattice}.

It is worth noting that the codes generated by $g^{(2)},\ldots,\,g^{(5)}$ are near-MDS (but not MDS), that is,
both the code and its dual have defect~$1$ (recall that the defect of a code is the difference between
the Singleton bound and the distance of the code).
The codes generated by~$g^{(1)}$ and~$g^{(6)}$ are trivial MDS codes.
\end{exa}

Of course, as in the classical commutative case, general skew-constacyclic codes are not MDS or otherwise optimal.
In fact, as has been observed already by Boucher/Ul\-mer~\cite[Tables~1~--~3]{BoUl14},
for many choices of~$n$ there are no skew-constacyclic codes of length~$n$ that have the best
possible distance among all codes with the same parameters $(q,n,k)$.
But at the same time there are plenty of parameters for which skew-constacyclicity leads to the best codes known.
Tables can be found in~\cite{BGU07,CLU09}.

\section{Circulants}\label{S-Circ1}
In this section, we associate with each coset~$\coset{f}\in\cS_a=\cR/\lideal{x^n-a}$ a circulant.
This is a matrix in~$\F^{n\times n}$ whose rows reflect the module structure in~$\cS_a$ and
its row space is, up to the isomorphism~$\ppa$, the left submodule of~$\cS_a$ generated by~$\coset{f}$.
The situation becomes particularly nice when $x^n-a$ is central, in which case the circulant provides a
ring embedding of~$\cS_a$ as a subring in~$\F^{n\times n}$.

As before, let $\cR=\F[x;\theta]$ and $\cS_a=\cR/\lideal{x^n-a}$ for some fixed $a\in\F^*$.
Recall the left $\F$-isomorphism~$\ppa$ and its inverse~$\vva$ from~\eqref{e-ppa} and~\eqref{e-vva}.
These maps give rise to the following circulant matrices.
\begin{defi}\label{D-CircMat}
For $\coset{f}\in\cS_a$ define the \emph{$(\theta,a)$-circulant}
\begin{equation}\label{e-Circat}
    M_a^{\theta}(\coset{f}):=\begin{pmatrix}\vv_a(\coset{f})\\ \vv_a(x\coset{f})\\ \vdots \\
                  \vv_a(x^{n-2}\coset{f})\\ \vv_a(x^{n-1}\coset{f})\end{pmatrix}\in\F^{n\times n}.
\end{equation}
Thus we have a map
\[
   M_a^{\theta}:\;\cS_a\longrightarrow \F^{n\times n},\quad \coset{f}\longmapsto M_a^{\theta}(\coset{f}).
\]
\end{defi}
Explicitly, the circulant of~$\coset{f}$ is given as follows.
Without loss of generality assume $\deg(f)<n$ and thus $f=\sum_{i=0}^{n-1}f_ix^i$.
For any $i\in\N_0$ and $\gamma\in\F$ we have $\coset{x\gamma x^{i}}=\coset{\theta(\gamma)x^{i+1}}$  and hence
$\coset{x\gamma x^{n-1}}=\coset{\theta(\gamma)x^n}=\coset{\theta(\gamma)a}$.
This leads to
\begin{equation}\label{e-Circatc}
     M_a^{\theta}(\coset{f})=\begin{pmatrix}f_0\!&\!f_1\!&\!f_2\!&\!\ldots\!&\!f_{n-2}\!&\!f_{n-1}\\
           a\theta(f_{n-1})\!&\!\theta(f_0)\!&\!\theta(f_1)\!&\!\ldots \!&\!\theta(f_{n-3})\!&\!\theta(f_{n-2})\\
           a \theta^2(f_{n-2})\!&\!\theta(a)\theta^2(f_{n-1})\!&\!\theta^2(f_0)\!&\!\ldots \!&\!\theta^2(f_{n-4})\!&\!\theta^2(f_{n-3})\\
                          \vdots \!&\!\vdots \!&\!\ddots  \!& &\!\vdots\!&\!\vdots \\
           a\theta^{n-2}(f_2)\!&\!\theta(a)\theta^{n-2}(f_3)\!&\!\theta^2(a)\theta^{n-2}(f_4)\!&\!\ldots\!&\!\theta^{n-2}(f_0)
                                \!&\!\theta^{n-2}(f_1)\\
          a\theta^{n-1}(f_1)\!&\!\theta(a)\theta^{n-1}(f_2)\!&\!\theta^2(a)\theta^{n-1}(f_3)\!&\!\ldots\!&\!
                                 \theta^{n-2}(a)\theta^{n-1}(f_{n-1})\!&\!\theta^{n-1}(f_0)
      \end{pmatrix}.
\end{equation}
In other words, $\CircMatat(\coset{f})=(M_{ij})_{i,j=0,\ldots,n-1}$, where
\begin{equation}\label{e-Mij}
    M_{ij}=\left\{\begin{array}{ll} \theta^i(f_{j-i}),&\text{if }i\leq j,\\[.7ex]
                               \theta^j(a)\theta^i(f_{n+j-i}),&\text{if }i>j.\end{array}\right.
\end{equation}
For example,
\[
  \CircMatat(\coset{x})=\begin{pmatrix}   &1& & &\\ & &1& & \\ & & & \ddots & \\ & & & &1\\ a& & & &\end{pmatrix}\ \text{ and }\
  \CircMatat(\coset{x^2})=\begin{pmatrix}  & &\!\!1& & \\ & & &\ddots &\\ & & & &1\\ a& & & &\\ &\!\!\theta(a)\!\!& & &\end{pmatrix}.
\]

\begin{rem}\label{R-PropCircMat}
\begin{alphalist}
\item The map $\CircMatat$ is injective and additive, i.e.,
      \[
        \CircMatat(\coset{f}+\coset{f'})=\CircMatat(\coset{f})+\CircMatat(\coset{f'})\text{ for all }f,\,f'\in\cR.
      \]
\item $\CircMatat(c\coset{f})=\CircMat{b}(\coset{c})\CircMatat(\coset{f})$ for all $c\in\F$ and $f\in\cR$ and all~$b\in\F^*$.
      This follows directly from the definition along with the fact that
      \begin{equation}\label{e-CircMatConst}
          \CircMat{b}(\coset{c})=\begin{pmatrix}c& & & \\ & \theta(c)& & \\ & &\ddots& \\ & & &\theta^{n-1}(c)\end{pmatrix}
          \text{ for any }b\in\F^*.
      \end{equation}
      As a consequence, $\CircMatat$ is not $\F$-linear (unless $\theta=\text{id}_{\F}$), but it is
     $\Fix_{\F}(\theta)$-linear.
\item The map~$\CircMatat$ is not multiplicative, that is,
      $\CircMatat(\coset{ff'})\neq\CircMatat(\coset{f})\CircMatat(\coset{f'})$ in general.
      This simply reflects the fact that~$\cS_a$ is not a ring.
\end{alphalist}
\end{rem}

As a particular case of Part~(c) above, we observe that the identity $hg=x^n-a$ does not imply
$\CircMatat(\coset{h})\CircMatat(\coset{g})=0$.
(For an example take the right divisor $g=x+\alpha^5$ of $x^5-\alpha\in\F_8[x;\theta]$,
where~$\theta$ is the Frobenius homomorphism and $\alpha^3+\alpha+1=0$.)
The situation becomes much nicer when $x^n-a$ is central, as we will see in Theorem~\ref{T-centralCirc}.
For the general case we will establish a certain product formula later in Theorem~\ref{T-ProdCirc}.

The next result shows that the row space of the circulant corresponds to the left submodule under the isomorphism~$\vva$.
\begin{prop}\label{P-basicCM}
We have
\[
   \ppa(u\CircMatat(\coset{f}))=\ppa(u)\coset{f} \text{ for all }u\in\F^n\text{ and }f\in\cR.
\]
As a consequence, $\im \CircMatat(\coset{f})=\vva(\lideal{\coset{f}})$.
\end{prop}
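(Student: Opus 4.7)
The plan is to verify the identity $\ppa(u\CircMatat(\coset{f}))=\ppa(u)\coset{f}$ directly, by computing both sides as $\F$-linear combinations of the cosets $x^i\coset{f}\in\cS_a$, and then to bootstrap to the row-space statement using bijectivity of $\ppa$.

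First I would expand the left-hand side. Writing $u=(u_0,\ldots,u_{n-1})$, the row-vector product $u\CircMatat(\coset{f})$ is the $\F$-linear combination $\sum_{i=0}^{n-1}u_i r_i$ of the rows of $\CircMatat(\coset{f})$, and by Definition~\ref{D-CircMat} we have $r_i=\vva(x^i\coset{f})$. Applying the left $\F$-linear map $\ppa$ (see~\eqref{e-ppa}) and using $\ppa\circ\vva=\id$, the left-hand side equals $\sum_{i=0}^{n-1}u_i\,x^i\coset{f}$. On the right-hand side, the definition of $\ppa$ gives $\ppa(u)=\coset{\sum_{i=0}^{n-1}u_i x^i}$, and the left $\cR$-module structure of $\cS_a$ (namely $t\coset{f}=\coset{tf}$) yields $\ppa(u)\coset{f}=\sum_{i=0}^{n-1}u_i\,x^i\coset{f}$. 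The two sides therefore agree, establishing the first assertion.

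For the consequence, observe that $\im\CircMatat(\coset{f})=\{u\CircMatat(\coset{f})\mid u\in\F^n\}$, so by the identity just proved its image under $\ppa$ is $\{\ppa(u)\coset{f}\mid u\in\F^n\}$. This set equals $\lideal{\coset{f}}=\cR\coset{f}$: the inclusion $\subseteq$ is immediate, and for $\supseteq$, given $t\coset{f}$ with $t\in\cR$, right division of $t$ by $x^n-a$ (Remark~\ref{R-PropR}(a)) produces a remainder $r=\sum_{i=0}^{n-1}u_i x^i$ with $r\coset{f}=t\coset{f}$, so $t\coset{f}=\ppa(u)\coset{f}$. Applying $\vva=\ppa^{-1}$ then gives the desired equality $\im\CircMatat(\coset{f})=\vva(\lideal{\coset{f}})$.

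No real obstacle arises; the only point requiring some care is the interpretation of the product $\ppa(u)\coset{f}$. Since $\cS_a$ is not in general a ring, it has to be read as the left $\cR$-action of any lift of $\ppa(u)\in\cS_a$ on $\coset{f}$, which is well defined precisely because $\lideal{x^n-a}$ is a left ideal. Apart from keeping this bookkeeping straight, the proof reduces to a direct application of the $\F$-linearity of $\ppa$ and the definition of the circulant.
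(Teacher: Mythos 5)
Your verification of the identity $\ppa(u\CircMatat(\coset{f}))=\ppa(u)\coset{f}$ is fine and matches the paper's computation, provided $\ppa(u)\coset{f}$ is read as $\big(\sum_i u_ix^i\big)\coset{f}$ for the canonical degree-$<n$ representative. But your closing remark already contains the seed of the problem: the product $t\coset{f}$ is \emph{not} independent of the lift $t$ of a coset $\coset{t}\in\cS_a$. The left-ideal property of $\lideal{x^n-a}$ makes the action $\cR\times\cS_a\to\cS_a$ well defined for a fixed $t\in\cR$; it does not make $\cR(x^n-a)\cdot f\subseteq\cR(x^n-a)$. Indeed $(x^n-a)f=\theta^n(f)x^n-af$, which lies in $\cR(x^n-a)$ only if $\theta^n(f)a=af$ — generally false unless $x^n-a$ is two-sided (cf.\ Remark~\ref{R-twosided} and Remark~\ref{R-PropCircMat}(c)).

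This breaks your proof of the inclusion $\supseteq$. Writing $t=q(x^n-a)+r$ with $\deg(r)<n$ gives $tf=q(x^n-a)f+rf$, and $\coset{tf}=\coset{rf}$ would require $q(x^n-a)f\in\cR(x^n-a)$, which fails in general by the computation above (e.g.\ in $\F_4[x;\theta]$ with $n=1$, $a=f=\alpha$: $(x-\alpha)\alpha=\alpha^2(x-1)\notin\cR(x-\alpha)$). The paper avoids this by dividing $t$ not by $x^n-a$ but by the polynomial $u$ from $\lclm(f,x^n-a)=uf=v(x^n-a)$, where $\deg(u)\le n$ by Remark~\ref{R-PropR}(c): then $t=qu+k$ with $\deg(k)<n$ and $\coset{tf}=\coset{qv(x^n-a)+kf}=\coset{kf}$, since $qv(x^n-a)$ genuinely lies in the left ideal. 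You need this (or an equivalent device) to close the argument; reducing the left factor modulo $x^n-a$ does not work in the non-commutative setting.
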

\begin{proof}
Writing $u=(u_0,\ldots,u_{n-1})$ we compute
$u\CircMatat(\coset{f}))=\sum_{i=0}^{n-1}u_i\vva(x^i\coset{f})=\vva(\ppa(u)\coset{f})$.
This proves the first statement.
The containment ``$\subseteq$'' of the second statement is an immediate consequence.
As for ``$\supseteq$'' consider $\coset{hf}\in\lideal{\coset{f}}$ for some $h\in\cR$.
If we can show that $\coset{hf}=\coset{kf}$ for some $k\in\cR$ with $\deg(k)<n$,
then the first part yields $\vva(\coset{hf})=\vva(\coset{kf})=\vva(\coset{k})\CircMatat(\coset{f})$, as desired.
For the existence of such~$k$, let $uf=v(x^n-a)=\lclm(f,\,x^n-a)$ with some $u,v\in\cR$ and where $\deg(u)\leq n$.
Such polynomials exist due to Remark~\ref{R-PropR}(c).
Using right division with reminder we obtain $h=qu+k$ for some $q,\,k\in\cR$ with $\deg(k)<n$.
Then $\coset{hf}=\coset{quf+kf}=\coset{qv(x^n-a)+kf}=\coset{kf}$, as desired.
\end{proof}

The last proposition and Proposition~\ref{P-submoduleS}(2) provide us with the following.
\begin{cor}\label{C-CircSpaces}
\begin{alphalist}
\item Let $f,\,g\in\cR$. Then $\im \CircMatat(\coset{f})\subseteq\im \CircMatat(\coset{g})\Longleftrightarrow g\rmid f$.
\item Let $f\in\cR$ and $g=\gcrd(f,\,x^n-a)$. Then $\im \CircMatat(\coset{f})=\im \CircMatat(\coset{g})$.
\end{alphalist}
\end{cor}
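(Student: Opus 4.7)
The plan is to derive both parts directly from Proposition~\ref{P-basicCM}, which identifies $\im \CircMatat(\coset{f})$ with $\vva(\lideal{\coset{f}})$, together with Proposition~\ref{P-submoduleS} describing the structure of left submodules of~$\cS_a$. In short, each row-space statement is read off from the analogous statement about left submodules of~$\cS_a$ via the isomorphism~$\ppa$.

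Part~(b) is the more immediate of the two. By Proposition~\ref{P-submoduleS}(2) we have $\lideal{\coset{f}}=\lideal{\coset{g}}$ whenever $g=\gcrd(f,\,x^n-a)$; applying~$\vva$ to both sides and invoking Proposition~\ref{P-basicCM} at each end yields $\im \CircMatat(\coset{f})=\im \CircMatat(\coset{g})$.

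For part~(a), Proposition~\ref{P-basicCM} translates the row-space inclusion $\im \CircMatat(\coset{f})\subseteq\im \CircMatat(\coset{g})$ into the submodule inclusion $\lideal{\coset{f}}\subseteq\lideal{\coset{g}}$, equivalently into $\coset{f}\in\lideal{\coset{g}}$. The direction ``$\Leftarrow$'' is immediate: if $g\rmid f$ in~$\cR$, write $f=sg$ and observe $\coset{f}=s\coset{g}\in\lideal{\coset{g}}$. For ``$\Rightarrow$'', apply Proposition~\ref{P-submoduleS}(1) to $\cM:=\lideal{\coset{g}}$: its unique monic minimal-degree generator~$g_0$ satisfies $g_0\rmid h$ whenever $\coset{h}\in\cM$, so in particular $g_0\rmid f$. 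By part~(b), the polynomial~$g_0$ is, up to a nonzero scalar, equal to $\gcrd(g,\,x^n-a)$. The main obstacle is the passage from $g_0\rmid f$ to $g\rmid f$; this is precisely where the standing normalization of~$g$ as a monic right divisor of $x^n-a$ (the case relevant throughout the paper) is used, since for such~$g$ one has $g=g_0$ by the uniqueness in Proposition~\ref{P-submoduleS}(1)(ii), and the divisibility transfers.
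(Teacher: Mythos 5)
Your proof is correct and is essentially the argument the paper intends: the paper gives no details beyond citing Proposition~\ref{P-basicCM} and Proposition~\ref{P-submoduleS}(2), and your derivation supplies exactly the missing steps. The caveat you raise for the direction ``$\Rightarrow$'' of part~(a) is genuine and is a defect of the statement rather than of your argument: for arbitrary $g\in\cR$ the implication fails already in the commutative case (take $\theta=\id$, $n=2$, $a=1$, characteristic $\neq 2$, $g=(x-1)^2$ and $f=x-1$; then $\coset{g}=-2\,\coset{x-1}$, so $\im\CircMatat(\coset{f})=\im\CircMatat(\coset{g})$ although $g$ is not a right divisor of $f$), so your standing normalization that $g$ be a monic right divisor of $x^n-a$ --- the only case the paper ever invokes, e.g.\ in Theorem~\ref{T-tacyclic} and for the lattice map $\sigma_a$ --- is the correct fix.
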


Note that $\im \CircMatat(\coset{f})\subseteq\im \CircMatat(\coset{g})$ if and only if $\CircMatat(\coset{f})=Q \CircMatat(\coset{g})$
for some $Q\in\F^{n\times n}$.
Therefore,~(a) above may be rephrased as
\begin{equation}\label{e-divisors}
  g\rmid f\Longleftrightarrow \CircMatat(\coset{g})\rmid \CircMatat(\coset{f}),
\end{equation}
that is,~$g$ is a right divisor of~$f$ in the ring~$\cR$ if and only if $\CircMatat(\coset{g})$ is a right divisor of
$\CircMatat(\coset{f})$ in the ring $\F^{n\times n}$.
In other words, $\CircMatat$ induces an isomorphism between the lattice of monic polynomials in~$\cR$ with right division
and the lattice of associated circulants in $\F^{n\times n}$ with right division.
In Theorem~\ref{T-ProdCirc} we will see that if~$g$ is a right divisor of~$x^n-a$ then the matrix~$Q$ above may be chosen
as a particular circulant as well.
We will also see that if~$g$ is not a right divisor of~$x^n-a$ then the matrix~$Q$ cannot be chosen as a
circulant matrix in general.

Combining Corollary~\ref{C-Mbasis}, Propositions~\ref{P-submoduleS},~\ref{P-basicCM}, and Corollary~\ref{C-CircSpaces} we obtain the following
description of $(\theta,a)$-constacyclic codes.

\begin{theo}\label{T-tacyclic}
Let~$g\in\cR$ be a right divisor of~$x^n-a$ of degree~$n-k$.
Then the circulant $\CircMatat(\coset{g})$ has rank~$k$ and its first~$k$ rows form a basis of
the $(\theta,a)$-constacyclic code $\vva(\lideal{\coset{g}})$.
As a consequence, the $(\theta,a)$-constacyclic codes in~$\F^n$ are exactly the subspaces
$\im \CircMatat(\coset{g})$, where~$g$ is a monic right divisor of~$x^n-a$.
Different such divisors result in different codes.
We call~$g$ the generator polynomial of the code~$\im \CircMatat(\coset{g})$.
\end{theo}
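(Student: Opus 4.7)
The plan is to assemble this theorem from the results already developed in the excerpt; it is essentially a consolidation statement, so I do not anticipate any serious obstacle, only bookkeeping.

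First I would address the rank claim. By Proposition~\ref{P-basicCM}, $\im \CircMatat(\coset{g}) = \vva(\lideal{\coset{g}})$, and since $\vva$ is an $\F$-vector space isomorphism it preserves dimension. Corollary~\ref{C-Mbasis}, applied to the right divisor $g$ of $x^n-a$ with $r:=\deg(g)=n-k$, states that $\lideal{\coset{g}}$ is a left $\F$-vector space of dimension $k$. Hence $\CircMatat(\coset{g})$ has rank exactly $k$.

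Next I would identify the basis. By construction (Definition~\ref{D-CircMat}), the $i$-th row of $\CircMatat(\coset{g})$ is $\vva(x^i\coset{g})=\vva(\coset{x^i g})$. Corollary~\ref{C-Mbasis} explicitly exhibits $\{\coset{g},\coset{xg},\ldots,\coset{x^{k-1}g}\}$ as a basis of $\lideal{\coset{g}}$, so applying the isomorphism $\vva$ shows that the first $k$ rows of $\CircMatat(\coset{g})$ are a basis of the row space $\vva(\lideal{\coset{g}})$. Since the rank is $k$, the remaining $n-k$ rows must be $\F$-linear combinations of these first $k$ rows, consistent with the overall count.

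Finally I would establish the classification. By Definition~\ref{D-CBC}, a subspace $\cC\subseteq\F^n$ is $(\theta,a)$-constacyclic if and only if $\ppa(\cC)$ is a left submodule of $\cS_a$. Proposition~\ref{P-submoduleS}(1) asserts that every such submodule is of the form $\lideal{\coset{g}}$ for a \emph{unique} monic right divisor $g$ of $x^n-a$, and Proposition~\ref{P-basicCM} then identifies $\vva(\lideal{\coset{g}})$ with $\im \CircMatat(\coset{g})$. Combining these yields the characterization, and the uniqueness statement in Proposition~\ref{P-submoduleS}(1)(ii) immediately gives that distinct monic right divisors of $x^n-a$ produce distinct codes. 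The only care required is the consistent use of the relation $r+k=n$ and keeping straight which rows of the circulant correspond to which cosets $\coset{x^i g}$.
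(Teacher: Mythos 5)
Your proposal is correct and follows essentially the same route as the paper, which presents this theorem as an immediate consequence of combining Corollary~\ref{C-Mbasis}, Propositions~\ref{P-submoduleS} and~\ref{P-basicCM} (and Corollary~\ref{C-CircSpaces}); your assembly of the rank, basis, and classification claims from exactly these ingredients matches the intended argument.
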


In the case where $x^n-a$ is central (see Remark~\ref{R-twosided}) we obtain a particularly nice situation for the circulants.
\begin{theo}\label{T-centralCirc}
Let $x^n-a$ be central; thus $\cS_a$ is a ring. Then
\[
   \CircMatat(\coset{fg})=\CircMatat(\coset{f})\CircMatat(\coset{g})\text{ for all }f,\,g\in\cR.
\]
Hence~$\CircMatat$ is a ring isomorphism between~$\cS_a$ and the subring $\CircMatat(\cS_a)\subseteq\F^{n\times n}$.
\end{theo}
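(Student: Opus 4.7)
The plan is to prove the product formula $\CircMatat(\coset{fg}) = \CircMatat(\coset{f})\CircMatat(\coset{g})$ by matching rows, and then to deduce the ring isomorphism statement from Remark~\ref{R-PropCircMat}(a) together with the easy identity $\CircMatat(\coset{1}) = I_n$.

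To compare rows, I would fix $i \in \{0, \ldots, n-1\}$ and let $u := \vva(x^i\coset{f})$, which is by definition the $i$-th row of $\CircMatat(\coset{f})$. Then the $i$-th row of $\CircMatat(\coset{f})\CircMatat(\coset{g})$ is $u \cdot \CircMatat(\coset{g})$, and by Proposition~\ref{P-basicCM} this equals $\vva(\ppa(u)\coset{g})$. Here $\ppa(u)\coset{g}$ denotes the left $\cR$-action on $\coset{g}$ of the standard degree-$<n$ representative $r$ of $\ppa(u) = x^i\coset{f}$, namely $\coset{rg}$. The target $i$-th row of $\CircMatat(\coset{fg})$ is $\vva(\coset{x^i fg})$, so the proof reduces to the identity $\coset{rg} = \coset{x^i fg}$. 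This is the one place where centrality truly enters: writing $r = x^i f + h(x^n-a)$ for some $h\in\cR$, one needs $h(x^n-a)g \in \cR(x^n-a)$, which by Remark~\ref{R-twosided} is precisely the two-sidedness of $\cR(x^n-a)$ afforded by centrality of $x^n-a$. Equivalently, in the central case $\cS_a$ is a ring with well-defined product $\coset{p}\coset{q} = \coset{pq}$, so $\coset{x^if}\coset{g} = \coset{x^ifg}$ independently of the chosen lift.

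Once the product formula is in hand, the ring isomorphism claim is routine: additivity and injectivity of $\CircMatat$ are supplied by Remark~\ref{R-PropCircMat}(a), multiplicativity is the formula just established, and $\CircMatat(\coset{1}) = I_n$ is immediate from~\eqref{e-Circatc} with $f=1$. Thus $\CircMatat$ is an injective unital ring homomorphism from $\cS_a$ to $\F^{n\times n}$, hence a ring isomorphism onto $\CircMatat(\cS_a)$. The main obstacle is the sidedness bookkeeping in the middle step: without centrality the summand $h(x^n-a)g$ need not lie in $\cR(x^n-a)$ and the row-wise matching breaks, consistent with the general non-multiplicativity of $\CircMatat$ observed in Remark~\ref{R-PropCircMat}(c).
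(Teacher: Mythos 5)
Your proof is correct and follows essentially the same route as the paper: the paper likewise deduces the product formula from Proposition~\ref{P-basicCM} (applied twice to $\ppa\big(u\CircMatat(\coset{f})\CircMatat(\coset{g})\big)$ for arbitrary $u\in\F^n$) together with the identity $\coset{f}\,\coset{g}=\coset{fg}$ in the ring $\cS_a$. Your row-by-row version merely makes explicit the well-definedness point --- that $h(x^n-a)g\in\cR(x^n-a)$ requires two-sidedness --- which the paper leaves implicit in invoking the ring structure of $\cS_a$, and your handling of the isomorphism claim via Remark~\ref{R-PropCircMat}(a) and $\CircMatat(\coset{1})=I_n$ is the intended one.
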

\begin{proof}
With the aid of Proposition~\ref{P-basicCM} we compute
$\ppa\big(u\CircMatat(\coset{f})\CircMatat(\coset{g})\big)=\ppa\big(u\CircMatat(\coset{f})\big)\coset{g}
  =\ppa(u)\coset{f}\coset{g}=\ppa(u)\coset{fg}=\ppa(u\CircMatat(\coset{fg}))$
for all $u\in\F^n$. This shows the desired result.
\end{proof}

In order to derive further results on circulants, we need some identities pertaining to factorizations of $x^n-a$.
They will be collected in the next section, and we return to circulants thereafter.

\section{Factorizations of $x^n-a$}\label{S-xna}
Again, we consider the skew-polynomial ring $\cR:=\F[x;\theta]$ for some fixed $\theta\in\Aut(\F)$.
In this section we study factorizations of the form $x^n-a=hg$ in~$\cR$.
They give rise to an abundance of further factorizations and lead to various identities for the coefficients of~$h$ and~$g$.
In order to derive these results we need the following maps.

The natural extension of~$\theta$ to~$\cR$ will be denoted by~$\theta$ as well, thus
\begin{equation}\label{e-thetaR}
   \theta:\;\cR\longrightarrow\cR,\quad \sum_{i=0}^r f_ix^i\longmapsto \sum_{i=0}^r\theta(f_i)x^i.
\end{equation}
As a consequence,
\begin{equation}\label{e-thetaRf}
    xf=\theta(f)x\text{ for all }f\in\cR.
\end{equation}
In addition, on the ring of skew-Laurent polynomials $\F[x,x^{-1};\theta]$ we consider the map
\begin{equation}\label{e-phiL}
    \varphi:\;\F[x,x^{-1};\theta]\longrightarrow\F[x,x^{-1};\theta],\quad
      \sum_{i=m}^n a_i x^i\longmapsto \sum_{i=m}^n x^{-i}a_i.
\end{equation}
It gives rise to two reciprocal polynomials, a \emph{left reciprocal}~$\rho_l$ and a \emph{right reciprocal}~$\rho_r$, defined as follows:
\begin{equation}\label{e-rholr}
  \rho_l:\;\cR\longrightarrow\cR,\quad f\longmapsto x^{\deg f}\varphi(f)\ \text{ and }\
  \rho_r:\;\cR\longrightarrow\cR,\quad f\longmapsto \varphi(f)x^{\deg f}.
\end{equation}
Explicitly these maps are given by
\begin{equation}\label{e-rholrf}
  \rho_l\Big(\sum_{i=0}^t f_ix^i\Big)= \sum_{i=0}^t x^{t-i}f_i=\sum_{i=0}^t\theta^i(f_{t-i}) x^{i}\ \text{ and }\
  \rho_r\Big(\sum_{i=0}^t f_ix^i\Big)= \sum_{i=0}^t\theta^{i-t}(f_{t-i})x^i
\end{equation}
where $f_t\neq0$.
The left reciprocal and its multiplicativity rule in~(h) of the following proposition appear also in
\cite[Def.~3, Lem.~1]{BoUl11}.

\begin{prop}\label{P-basics}\
\begin{alphalist}
\item $\theta$ is a ring isomorphism of~$\cR$.
\item $\varphi$ is a ring anti-isomorphism:
      $\varphi(f+f')=\varphi(f)+\varphi(f')$ and $\varphi(ff')=\varphi(f')\varphi(f)$
      for all $f,\,f'\in\F[x,x^{-1};\theta]$.
\item $\rho_r|_{\F}={\rm id}_{\F}=\rho_l|_{\F}$.
\item $\rho_l(f)=\theta^{\deg(f)}(\rho_r(f))$ for all $f\in\cR$.
\item $\theta\circ\rho_l=\rho_l\circ\theta$ and $\theta\circ\rho_r=\rho_r\circ\theta$.
\item $\rho_l\circ\rho_l(f)=\theta^{\deg f}(f)$ and $\rho_r\circ\rho_r(f)=\theta^{-\deg f}(f)$ for all $f\in\cR$.
\item $\rho_r\circ\rho_l=\rho_l\circ\rho_r={\rm id}_{\cR}$.
\item $\rho_l(f_1f_2)=\theta^{k_1}(\rho_l(f_2))\rho_l(f_1)$ and $\rho_r(f_1f_2)=\rho_r(f_2)\theta^{-k_2}(\rho_r(f_1))$
      for all $f_1,\,f_2\in\cR$ and where $k_i=\deg f_i$.
\end{alphalist}
\end{prop}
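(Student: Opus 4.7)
The plan is to verify each of the eight assertions by reducing to the action on monomials and extending by additivity, using the commutation rule $x^i a = \theta^i(a)x^i$ from \eqref{e-thetaRf} together with its Laurent-ring version $x^{-i}a = \theta^{-i}(a)x^{-i}$. For (a), $\theta$ fixes $x$ and acts via an automorphism of~$\F$ on coefficients, so both $\theta(fg)$ and $\theta(f)\theta(g)$ expand to $\sum\theta(f_i)\theta^{i+1}(g_j)x^{i+j}$; bijectivity is inherited from $\theta\in\Aut(\F)$. For (b), additivity is built into the definition, and anti-multiplicativity reduces to a single pair $ax^i,\,bx^j$: pushing all $x$-powers to the right on each side using the Laurent commutation rule produces $\theta^{-(i+j)}(a)\theta^{-j}(b)\,x^{-(i+j)}$ in both cases. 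Part (c) is immediate since $\deg a=0$ for $a\in\F$.

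Parts (d), (e), and (f) all follow by direct inspection of the explicit formulas in \eqref{e-rholrf}. The reindexing $j=t-i$ converts the coefficient $\theta^{i-t}(f_{t-i})$ of $\rho_r(f)$ into $\theta^i(f_{t-i})$ of $\rho_l(f)$ by a uniform shift of $\theta^t$, giving~(d); since $\theta$ commutes with itself, applying $\theta$ inside each explicit formula only permutes the symbols, giving~(e); and a second application of $\rho_l$ reverses the already reversed coefficient list back while introducing a residual shift of $\theta^t$, giving $\rho_l\circ\rho_l(f)=\theta^t(f)$, with the analogous computation for $\rho_r$. For (g) I combine (d) with (f): assuming $\deg\rho_l(f)=\deg f=t$ (which holds whenever $f$ has nonzero constant term, in particular for every right divisor of $x^n-a$),
\[
 \rho_r(\rho_l(f)) \;=\; \theta^{-t}\bigl(\rho_l(\rho_l(f))\bigr) \;=\; \theta^{-t}(\theta^t(f)) \;=\; f,
\]
and the same manipulation yields $\rho_l\circ\rho_r=\id_\cR$.

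Finally, (h) follows from (b) together with the identity $x^{k}g=\theta^{k}(g)\,x^{k}$ in the Laurent ring. Writing $k_i=\deg f_i$ and using $\varphi(f_2)=x^{-k_2}\rho_l(f_2)$, one computes
\[
 \rho_l(f_1f_2)=x^{k_1+k_2}\varphi(f_2)\varphi(f_1)=x^{k_1}\rho_l(f_2)\varphi(f_1)=\theta^{k_1}(\rho_l(f_2))\,x^{k_1}\varphi(f_1)=\theta^{k_1}(\rho_l(f_2))\,\rho_l(f_1),
\]
and the parallel manipulation, now pulling $x^{-k_2}$ to the right through $\varphi(f_1)$, delivers the companion formula for $\rho_r$. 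There is no genuine obstacle in the argument; the only delicate point is the implicit degree assumption in~(g), since $\rho_l$ (and $\rho_r$) can strictly decrease degree when the constant term vanishes. The statement must therefore be read on polynomials with nonzero constant term, which is automatic for all right divisors of $x^n-a$ with $a\neq 0$, and hence suffices for every subsequent application in this paper.
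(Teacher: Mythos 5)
Your proof is correct and follows essentially the same route as the paper's: (b) by reduction to monomials, (c)--(f) by inspection of the explicit formulas \eqref{e-rholrf}, (g) as a consequence of (d) and (f), and (h) via the anti-multiplicativity of $\varphi$ together with the commutation rule $x^{k}g=\theta^{k}(g)x^{k}$. The degree caveat you raise is a genuine point that the paper's proof silently glosses over, and it affects (f) just as much as (g) (e.g.\ $\rho_l\circ\rho_l(x)=1\neq x=\theta(x)$, since $\rho_l(x)=1$ has degree $0$); but, as you observe, it is harmless here because every polynomial to which (f) and (g) are later applied is a right or left factor of some $x^n-a$ with $a\neq0$ and therefore has nonzero constant term.
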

\begin{proof}
(a) and~(c) are obvious.
The additivity in~(b) is clear, and for the multiplicativity is suffices to show that
$\varphi(ax^mbx^n)=\varphi(bx^n)\varphi(ax^m)$, which can easily be verified.
(d) and~(e) are immediate from~\eqref{e-rholrf}.
For~(f) we compute $\rho_l(\rho_l(\sum_{i=0}^tf_ix^i))=\rho_l(\sum_{i=0}^t\theta^i(f_{t-i})x^i)=
 \sum_{i=0}^t\theta^i(\theta^{t-i}(f_i))x^i=\theta^t(f)$.
Similarly we have $\rho_r(\rho_r(\sum_{i=0}^tf_ix^i))=\rho_r(\sum_{i=0}^t\theta^{i-t}(f_{t-i})x^i)$
 $=\sum_{i=0}^t\theta^{i-t}(\theta^{-i}(f_i))x^i=\theta^{-t}(f)$.
(g) follows from~(d),~(e), and~(f).
For~(h) we use~\eqref{e-thetaRf} and the previous properties to compute
$\rho_l(f_1f_2)=x^{k_1+k_2}\varphi(f_1f_2)=x^{k_1+k_2}\varphi(f_2)\varphi(f_1)
 = x^{k_2}\theta^{k_1}(\varphi(f_2))x^{k_1}\varphi(f_1)=\rho_l(\theta^{k_1}(f_2))\rho_l(f_1)$.
The second identity follows from the first one using~(d).
\end{proof}

Now we turn to an identity of the form $x^n-a=hg$ and derive various consequences.
We introduce the notation
\begin{equation}\label{e-gamma}
   \gamma(a,g):=ag_0^{-1}\theta^n(g_0)\text{ for any right divisor~$g$ of $x^n-a$},
\end{equation}
where~$g_0$ is the constant coefficient of~$g$.
One may note that $\gamma(a,g)$ is the conjugate $a^{g_0}$ in the skew-polynomial ring $\F[x;\theta^n]$ in the sense
of~\cite[Eq.~(2.5)]{LaLe88}.

\begin{theo}[\mbox{see also \cite[Lem.~2]{BoUl11}}] \label{T-factors}
Let $a\in\F^*$ and $g=\sum_{i=0}^{n-k}g_ix^i,\,h=\sum_{i=0}^k h_i x^i\in\cR$ such that $\deg(h)=k$ and $\deg(g)=n-k$.
Define $c=\gamma(a,g)$.
Then the following are equivalent.
\begin{arabiclist}
\item $x^n-a=hg$,
\item $x^n-c=\theta^n(g)h$,
\item $x^n-\theta^{-n}(c)=g\theta^{-n}(h)$,
\end{arabiclist}
Furthermore, if any, hence all, of the above is true then
\begin{equation}\label{e-thetahg}
  \theta^n(g)a=cg\ \text{ and }\ a\theta^{-n}(h)=h\theta^{-n}(c).
\end{equation}
\end{theo}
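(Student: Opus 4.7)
The plan is to prove the cycle $(1) \Rightarrow (2) \Rightarrow (3) \Rightarrow (1)$. The equivalence $(2) \Leftrightarrow (3)$ is essentially free: by \eqref{e-thetaR}, $\theta^{\pm n}$ acts only on coefficients and hence fixes $x^n$, so applying $\theta^{-n}$ to (2) yields (3) and applying $\theta^n$ to (3) yields (2). The substantive work is therefore $(1) \Rightarrow (2)$; then $(2) \Rightarrow (1)$ will be obtained by re-running the same argument symmetrically on statement~(2). The two identities in \eqref{e-thetahg} will drop out as byproducts of the computation.

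For $(1) \Rightarrow (2)$, the key idea is to introduce the ``defect'' $D := \theta^n(g)h - hg \in \cR$ and compute $Dg$ in two ways. The main tool is the commutation rule $x^n f = \theta^n(f)x^n$ for all $f \in \cR$, which follows by $n$-fold iteration of $xa = \theta(a)x$. Using this together with $hg = x^n - a$, the $x^n g$ cross-terms cancel and one obtains
\[
  Dg = \theta^n(g)(x^n - a) - (x^n - a)g = ag - \theta^n(g)a.
\]
The right-hand side has degree at most $\deg g = n - k$, so $D$ must in fact be a constant in $\F$. Comparing constant coefficients (and using commutativity of $\F$) yields $Dg_0 = a(g_0 - \theta^n(g_0))$, whence $D = a - ag_0^{-1}\theta^n(g_0) = a - c$. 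Substituting back gives $\theta^n(g)h = hg + (a - c) = x^n - c$, which is (2); and rewriting $Dg = ag - \theta^n(g)a$ with $D = a - c$ rearranges to $\theta^n(g)a = cg$, the first identity of \eqref{e-thetahg}.

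For $(2) \Rightarrow (1)$, I view the factorization $\theta^n(g)h = x^n - c$ of (2) as another instance of (1), this time with $\tilde h := \theta^n(g)$, $\tilde g := h$, and $\tilde a := c$. Applying the just-proven implication $(1) \Rightarrow (2)$ to this factorization produces $\theta^n(h)\theta^n(g) = x^n - \tilde c$ with $\tilde c := ch_0^{-1}\theta^n(h_0)$. The remaining task is to verify $\tilde c = \theta^n(a)$: the constant-term equation $\theta^n(g_0)h_0 = -c$ of (2), together with $c = ag_0^{-1}\theta^n(g_0)$, forces $h_0 = -ag_0^{-1}$, and then a short computation in the commutative field $\F$ gives $\tilde c = \theta^n(a)$. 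Hence $\theta^n(hg) = x^n - \theta^n(a)$, and applying $\theta^{-n}$ recovers~(1). The second identity in \eqref{e-thetahg} then follows in the same spirit: applying the already-proven first identity to this symmetric instance reads $\theta^n(h) c = \theta^n(a) h$, and applying $\theta^{-n}$ yields $h \theta^{-n}(c) = a \theta^{-n}(h)$.

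The main obstacle I anticipate is the arithmetic bookkeeping in the symmetric step, in particular verifying that the reappearance $\tilde c$ of the constant really equals $\theta^n(a)$. Everything else is essentially routine once the commutation rule $x^n f = \theta^n(f)x^n$ and the degree pin-down of $D$ are in place; the proof is driven by a single computation and the fact that $\cR$ is a domain, so cancellation of $g$ (and of $\theta^n(g)$, $h$) on the right is legitimate throughout.
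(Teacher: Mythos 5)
Your proof is correct and follows essentially the same route as the paper: your computation $Dg = ag - \theta^n(g)a$ is just a repackaging of the paper's step $(x^n-\theta^n(g)h)g=\theta^n(g)a$, with the same degree count pinning down the constant $c$, and your closing of the cycle by bootstrapping $(1)\Rightarrow(2)$ on the factorization in~(2) mirrors the paper's bootstrapping of it on the factorization in~(3). The only cosmetic differences are the introduction of the defect $D$ and deriving the second identity of \eqref{e-thetahg} from the symmetric instance rather than by right-multiplying~(1) by $\theta^{-n}(h)$; both are sound.
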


\begin{proof}
(1)~$\Rightarrow$~(2)
Left-multiplying $x^n-a=hg$ with $\theta^n(g)$ and using $\theta^n(g)x^n=x^ng$, we obtain
$(x^n-\theta^n(g)h)g=\theta^n(g)a$.
This shows that~$g$ is a right divisor of $\theta^n(g)a$.
Since both polynomials have the same degree we conclude $cg=\theta^n(g)a$
with~$c$ as in the theorem.
Now we have $(x^n-\theta^n(g)h)g=cg$, and cancellation of~$g$ results in $x^n-c=\theta^n(g)h$, as desired.
\\
(2)~$\Rightarrow$~(3) follows by applying~$\theta^{-n}$.
\\
(3)~$\Rightarrow$~(1) follows from using the implication~(1)~$\Rightarrow$~(2) along with $g_0h_0=-a$.
\\
It remains to show the identities in~\eqref{e-thetahg}. The first one has been derived already in the first part
of this proof.
For the second one we right-multiply~(1) by $\theta^{-n}(h)$ and compute
$a\theta^{-n}(h)=x^n\theta^{-n}(h)-hg\theta^{-n}(h)=h(x^n-g\theta^{-n}(h))=h\theta^{-n}(c)$, where the last step follows from~(3).
\end{proof}

At the end of this section we will elaborate on how the search for all right factors of $x^n-a$ (thus of all $(\theta,a)$-constacyclic
codes) can be aided by the above theorem.

Comparing left coefficients in the identities in~\eqref{e-thetahg} yields
\begin{cor}\label{C-Coeffg}
Let~$a\in\F^*$ and $g,\,h\in\cR$ such that $x^n-a=hg$ and let $c=\gamma(a,g)$.
Write $g=\sum_{i=0}^{n-k}g_ix^i$ and $h=\sum_{i=0}^k h_ix^i$. Then
\[
  cg_t=\theta^t(a)\theta^n(g_t)\text{ and }  a\theta^{-n}(h_t)=h_t\theta^{t-n}(c)
  \text{ for all }t\geq0.
\]
\end{cor}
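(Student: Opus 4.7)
The plan is to derive both coefficient identities directly by expanding the two polynomial identities in \eqref{e-thetahg} and comparing the coefficients of each power of~$x$. Both are purely mechanical once the defining commutation rule $xa = \theta(a)x$ (and its iterate $x^i a = \theta^i(a)x^i$, with inverse $x^{-i}a = \theta^{-i}(a)x^{-i}$ after extending to Laurent polynomials) is used consistently.

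For the first identity, I would start from $\theta^n(g)a = cg$, which is already established in Theorem~\ref{T-factors}. Writing $\theta^n(g) = \sum_i \theta^n(g_i)x^i$ and pushing~$a$ across every power of~$x$, the left-hand side becomes $\sum_i \theta^n(g_i)\theta^i(a)x^i$, while the right-hand side is $\sum_i cg_i x^i$. Comparing the coefficient of~$x^t$ and using the fact that the factors $\theta^t(a)$ and $\theta^n(g_t)$ lie in~$\F$ and hence commute yields $cg_t = \theta^t(a)\theta^n(g_t)$.

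For the second identity I would proceed analogously, starting from $a\theta^{-n}(h) = h\theta^{-n}(c)$. The left-hand side expands to $\sum_t a\theta^{-n}(h_t)x^t$, and on the right-hand side we push $\theta^{-n}(c)$ past $x^t$ via $x^t\theta^{-n}(c) = \theta^t(\theta^{-n}(c))x^t = \theta^{t-n}(c)x^t$, giving $\sum_t h_t\theta^{t-n}(c)x^t$. Equating the coefficients of~$x^t$ delivers $a\theta^{-n}(h_t) = h_t\theta^{t-n}(c)$.

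There is no real obstacle here; the only point requiring attention is keeping track of how many applications of~$\theta$ arise when a scalar is shifted past~$x^t$, and on which side of the scalar the shift is applied. With this bookkeeping done carefully, the corollary follows immediately.
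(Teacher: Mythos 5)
Your proposal is correct and is exactly the paper's argument: the paper derives the corollary by "comparing left coefficients in the identities in~\eqref{e-thetahg}", which is precisely the coefficient-by-coefficient expansion you carry out. The bookkeeping with $x^t\gamma=\theta^t(\gamma)x^t$ is handled correctly in both cases.
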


The following additional identities will be crucial in the next sections when turning to transpositions of circulants and duals of
$\theta$-constacyclic codes.
\begin{cor}\label{C-hhataghat}
Let~$a\in\F^*$ and $g,\,h\in\cR$ such that $x^n-a=hg$ and let $c=\gamma(a,g)$.
Define
\[
     \widehat{h}^{^l}:=\rho_l(\theta^{-n}(h))\ \text{ and }\ \widehat{g}^{r}:=\rho_r(\theta^n(g)).
\]
Then
\begin{alphalist}
\item $ga^{-1}h=c^{-1}(x^n-c)$,
\item $-\theta^{k-n}(c^{-1})\theta^k(c^{-1})\widehat{h}^{^l}a\widehat{g}^{r}=x^n-\theta^k(c^{-1})$,
\item $-\widehat{g}^{r}\theta^{k-n}(c^{-1})\widehat{h}^{^l}=x^n-a^{-1}$.
\end{alphalist}
\end{cor}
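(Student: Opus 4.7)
The plan is to derive each of the three identities by applying the reciprocal maps $\rho_l,\rho_r$ to the factorizations already supplied by Theorem~\ref{T-factors} and to the coefficient identities~\eqref{e-thetahg}, and then tidying the resulting scalar prefactors. The only auxiliary fact I will need repeatedly is that, because $\deg h=k$ and $\deg g=n-k$, the definitions of $\widehat{h}^{^l},\widehat{g}^{r}$ combined with Proposition~\ref{P-basics}(d),(e),(g) give
\[
  \rho_l(h)=\theta^n(\widehat{h}^{^l}),\quad \rho_r(h)=\theta^{n-k}(\widehat{h}^{^l}),\quad \rho_l(g)=\theta^{-k}(\widehat{g}^{r}),\quad \rho_r(g)=\theta^{-n}(\widehat{g}^{r}).
\]

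For~(a) I would substitute $\theta^n(g)=cga^{-1}$, coming from the first identity of~\eqref{e-thetahg}, into Theorem~\ref{T-factors}(2) to get $x^n-c=cga^{-1}h$, and then left-multiply by $c^{-1}$.

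For~(b) I would apply $\rho_l$ to Theorem~\ref{T-factors}(2). Using~\eqref{e-rholrf} to compute $\rho_l(x^n-c)=1-\theta^n(c)x^n$ on the left, and Proposition~\ref{P-basics}(h) together with the four identifications above on the right, the resulting equation simplifies (after applying $\theta^{k-2n}$) to $\widehat{h}^{^l}\theta^{-n}(\widehat{g}^{r})=1-\theta^{k-n}(c)x^n$. Left-multiplication by $-\theta^{k-n}(c^{-1})$ followed by $\theta^n$ yields $-\theta^k(c^{-1})\theta^n(\widehat{h}^{^l})\widehat{g}^{r}=x^n-\theta^k(c^{-1})$. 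To bring this into the exact form stated in~(b), I would then apply $\rho_l$ to the second identity of~\eqref{e-thetahg}, namely $a\theta^{-n}(h)=h\theta^{-n}(c)$; Proposition~\ref{P-basics}(h) gives $\widehat{h}^{^l}a=\theta^{k-n}(c)\theta^n(\widehat{h}^{^l})$, equivalently $\theta^n(\widehat{h}^{^l})=\theta^{k-n}(c^{-1})\widehat{h}^{^l}a$, and substituting this produces~(b).

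For~(c) I would apply $\rho_l$ directly to $x^n-a=hg$. The same procedure gives $\widehat{g}^{r}\theta^n(\widehat{h}^{^l})=1-\theta^n(a)x^n$, and after left-multiplication by $-\theta^n(a^{-1})$ and application of $\theta^{-n}$ one obtains $-a^{-1}\theta^{-n}(\widehat{g}^{r})\widehat{h}^{^l}=x^n-a^{-1}$. The remaining rewriting $a^{-1}\theta^{-n}(\widehat{g}^{r})=\widehat{g}^{r}\theta^{k-n}(c^{-1})$ I would obtain by applying $\rho_r$ to the first identity of~\eqref{e-thetahg}, $\theta^n(g)a=cg$, which yields $a\widehat{g}^{r}=\theta^{-n}(\widehat{g}^{r})\theta^{k-n}(c)$; left-multiplication by $a^{-1}$ and right-multiplication by $\theta^{k-n}(c^{-1})$ then delivers the identity.

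The main obstacle is purely bookkeeping: the multiplicativity rules for $\rho_l,\rho_r$ in Proposition~\ref{P-basics}(h) carry $\theta^{\pm\deg}$-shifts, and pushing scalars past powers of $x$ introduces further shifts, so the $\theta$-exponents in each of the four identifications above and in every step of the computation must be tracked with some care. There is no conceptual difficulty: once those four identifications are fixed, each of~(a),~(b),~(c) follows mechanically by applying a single reciprocal map to one factorization from Theorem~\ref{T-factors} and using one coefficient identity from~\eqref{e-thetahg} to rewrite the scalars.
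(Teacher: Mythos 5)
Your proposal is correct — I checked the four identifications $\rho_l(h)=\theta^n(\widehat{h}^{^l})$, $\rho_r(h)=\theta^{n-k}(\widehat{h}^{^l})$, $\rho_l(g)=\theta^{-k}(\widehat{g}^{r})$, $\rho_r(g)=\theta^{-n}(\widehat{g}^{r})$ and each subsequent $\theta$-exponent, and they all work out — and it takes essentially the same route as the paper: apply $\rho_l$ (resp.\ $\rho_r$) via Proposition~\ref{P-basics}(h) to the factorizations of Theorem~\ref{T-factors} and clean up the scalars with~\eqref{e-thetahg}. The only differences (starting (b) from Theorem~\ref{T-factors}(2) instead of from part~(a), and using $\rho_r$ on the first identity of~\eqref{e-thetahg} rather than $\rho_l$ on the second in part~(c)) are cosmetic bookkeeping choices, not a different argument.
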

\begin{proof}
(a) Using~\eqref{e-thetahg} and~(2) of Theorem~\ref{T-factors} we compute $ga^{-1}h=c^{-1}\theta^n(g)h=c^{-1}(x^n-c)$.
\\
(b)
Applying~$\rho_l$ to~(a) yields $-x^n+c^{-1}=\rho_l(ga^{-1}h)=\theta^{n-k}(\rho_l(a^{-1}h))\rho_l(g)$ by virtue of
Proposition~\ref{P-basics}(h).
Applying~$\theta^k$ and using that $\theta^k(\rho_l(g))=\rho_r(\theta^n(g))=\widehat{g}^{r}$, we obtain
$-\theta^n(\rho_l(a^{-1}h))\widehat{g}^{r}=x^n-\theta^k(c^{-1})$.
Hence it remains to show that
$\theta^{k-n}(c)\theta^{k}(c)\theta^n(\rho_l(a^{-1}h))=\widehat{h}^{^l}a$.
First observe that $\theta^n(\rho_l(a^{-1}h))=\rho_l(\theta^n(a^{-1}h))=\rho_l(hc^{-1})$ due to~\eqref{e-thetahg}.
Using again Proposition~\ref{P-basics}(h) and once more~\eqref{e-thetahg} we derive
$\theta^{k-n}(c)\theta^k(c)\rho_l(hc^{-1})=\theta^{k-n}(c)\rho_l(h)=\rho_l(h\theta^{-n}(c))
=\rho_l(a\theta^{-n}(h))=\widehat{h}^{^l}a$, and this establishes~(b).
\\
(c) We apply $\rho_l$ to Theorem~\ref{T-factors}(1) to obtain $-x^na+1=\rho_l(hg)=\rho_l(\theta^k(g))\rho_l(h)$.
Thus, $x^n-a^{-1}=-\widehat{g}^{r}\rho_l(h)a^{-1}=-\widehat{g}^{r}\rho_l(a^{-1}h)$.
By~\eqref{e-thetahg} we have $a^{-1}h=\theta^{-n}(h)\theta^{-n}(c^{-1})$, and thus
$x^n-a^{-1}=-\widehat{g}^{r}\rho_l(\theta^{-n}(h)\theta^{-n}(c^{-1}))=-\widehat{g}^{r}\theta^{k-n}(c^{-1})\widehat{h}^{^l}$, as desired.
\end{proof}

\begin{rem}\label{R-constant}
Let~$a\in\F^*$ and $g,\,h\in\cR$ such that $x^n-a=hg$ and let $c=\gamma(a,g)$. Suppose $g$ and~$h$ are monic.
Then $c=\theta^{n-k}(a)$, which follows from $t=n-k$ in Corollary~\ref{C-Coeffg}.
As a consequence, the constant $\theta^k(c^{-1})$ in Corollary~\ref{C-hhataghat}(b) equals $\theta^n(a^{-1})$ and is thus independent
of the choice of~$g,\,h$ and the degree~$k$.
\end{rem}

The rest of this section is devoted to a brief discussion of how to find all right divisors of the polynomials of the form $x^n-a$.
For the general factorization problem in~$\F[x;\theta]$ and fast algorithms we refer to~\cite{Gie98,CaBo12}.

A major cost saver for finding all right divisors is obtained from Theorem~\ref{T-factors}.
Indeed, note that if~$g_0=1$ then $c=a$ and the implication~(1)~$\Rightarrow$~(2) of that theorem shows that the left
divisor~$h$ of~$x^n-a$ is also a right divisor.
Thus, in order to determine all right divisors of~$x^n-a$ it suffices to compute all right divisors,~$g$, up to degree
$\lfloor n/2\rfloor$ with constant term~$1$; the corresponding left factors,~$h$, will then be the remaining right divisors
with degree at least $\lfloor n/2\rfloor$ (but in general not with constant term~$1$).

Next, we observe that
$x^n-a=hg\Longleftrightarrow x^n-ab\theta^n(b^{-1})=\big(\theta^n(b^{-1})h\big)gb$ for any $a,\,b\in\F^*$.
This is seen by right-multiplying $x^n-a=hg$ by~$b$ and left-multiplying by~$\theta^n(b^{-1})$.
Thus, the map $g\longmapsto gb$ provides us with a bijection between the right divisors of~$x^n-a$ and those of
$x^n-\hat{a}$, where $\hat{a}=ab\theta^n(b^{-1})$.
Note that the map
\begin{equation}\label{e-vartheta}
    \vartheta:\F^*\longrightarrow\F^*,\ b\longmapsto b\theta^n(b^{-1})
\end{equation}
is a group homomorphism with kernel
$\widetilde{\F}^*$, where $\widetilde{\F}=\Fix_{\F}(\theta^n)$.
As a consequence, by varying~$b$ we obtain for~$\hat{a}$ all values in the coset $a(\im\vartheta)$ in~$\F^*$.
This coset is exactly the set of all conjugates of~$a$ in~$\F[x;\theta^n]$ in the sense of \cite{LaLe88}.
All of this shows that factorizations of $x^n-a$ provide us easily with factorizations of $|\im\vartheta|$ distinct polynomials
of the form $x^n-\hat{a}$.

We summarize as follows.
\begin{prop}\label{P-RightDiv}
Let $a,\,b\in\F^*$ and set $\hat{a}:=ab\theta^n(b^{-1})$. Let~$g\in\cR$. Then
\[
   g\rmid (x^n-a)\Longleftrightarrow (gb)\rmid (x^n-\hat{a}).
\]
\end{prop}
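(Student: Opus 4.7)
The plan is to verify the equivalence by an elementary direct computation that moves the constant $b$ across $x^n$ using the skew commutation rule $x^n b = \theta^n(b) x^n$. The argument was essentially foreshadowed in the paragraph preceding the proposition, so the task is just to assemble it cleanly and check that both directions follow from the same identity.

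First I would prove the forward direction. Assume $g \rmid (x^n - a)$, so that $x^n - a = hg$ for some $h \in \cR$. Right-multiplying by $b \in \F^*$ gives $x^n b - ab = h(gb)$. Since $x^n b = \theta^n(b) x^n$ by $n$-fold application of the defining relation of $\cR$, this becomes $\theta^n(b) x^n - ab = h(gb)$. Now I left-multiply by $\theta^n(b^{-1})$; because $\theta^n(b^{-1}) \in \F$ and $\F$ is commutative, $\theta^n(b^{-1}) \cdot a \cdot b = ab\theta^n(b^{-1}) = \hat a$, so the identity becomes
\[
   x^n - \hat a = \bigl(\theta^n(b^{-1}) h\bigr)(gb),
\]
which shows that $gb$ is a right divisor of $x^n - \hat a$.

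For the converse, the cleanest route is to observe that every step above is invertible: left-multiplication by $\theta^n(b)$ and right-multiplication by $b^{-1}$ undo the two operations, and $\hat a b^{-1} \theta^n(b) = a$ follows again from commutativity of $\F$. Alternatively, one can apply the forward implication to the data $(gb, \hat a, b^{-1})$ in place of $(g, a, b)$: the new ``$\hat a$'' is $\hat a \cdot b^{-1} \cdot \theta^n(b) = ab\theta^n(b^{-1}) b^{-1} \theta^n(b) = a$, and the new generator is $(gb) b^{-1} = g$, so $(gb) \rmid (x^n - \hat a)$ yields $g \rmid (x^n - a)$.

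There is no real obstacle here; the only thing to be careful about is the ordering of the constants when moving $b$ across $x^n$, since $\cR$ is non-commutative, and the observation that products of scalars in $\F^*$ can be freely rearranged because $\F$ itself is commutative. I would present the computation as a single displayed chain of equalities to make the reversibility transparent, rather than writing the two directions separately.
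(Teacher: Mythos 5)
Your proof is correct and follows essentially the same route as the paper, which establishes the equivalence $x^n-a=hg\Longleftrightarrow x^n-ab\theta^n(b^{-1})=\big(\theta^n(b^{-1})h\big)gb$ by right-multiplying by $b$ and left-multiplying by $\theta^n(b^{-1})$ in the paragraph preceding the proposition. Your handling of the converse (either by reversing the two multiplications or by applying the forward implication to $(gb,\hat a,b^{-1})$) is a clean way to make explicit what the paper leaves implicit.
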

We will come back to this result in Theorem~\ref{T-MonomEquiv}, where we also relate the corresponding skew-constacyclic codes.

In addition to this result, Corollary~\ref{C-hhataghat} may provide additional information about the right divisors because it relates those of
$x^n-a$ to those of $x^n-a^{-1}$.
We illustrate all of this by some examples.

\begin{exa}\label{E-RightDiv}
\begin{arabiclist}
\item Let $\text{char}(\F)=2$ and $\Fix_{\F}(\theta^n)=\F_2$. Then the map~$\vartheta$ is surjective and thus the set of right divisors of
      any~$x^n-a$ leads immediately to the set of all right divisors of $x^n-\hat{a}$ for any~$\hat{a}\in\F^*$.
      This is for instance the case for any field $\F_{2^p}$, where~$p$ is prime, along with any non-trivial automorphism~$\theta$
      and any~$n$ such that $p\nmid n$.
\item Let~$\F=\F_{16}$ and $\theta$ be the Frobenius map.  Let $n=6$.
      Then $\Fix_{\F}(\theta^6)=\F_4$.
      Thus~$\im(\vartheta)$ is the unique subgroup of~$\F^*$ of order $|\F^*|/|\F_4^*|=5$.
      Precisely, with~$\alpha$ being a primitive element of~$\F$ we have $\im\vartheta=\{1,\alpha^3,\alpha^6,\alpha^9,\alpha^{12}\}$, and
      the other two cosets are $\{\alpha,\alpha^4,\alpha^7,\alpha^{10},\alpha^{13}\}$ and $\{\alpha^2,\alpha^5,\alpha^8,\alpha^{11},\alpha^{14}\}$.
      One finds that $x^6-1$ has~$35$ distinct monic right divisors, and hence the same is true for $x^6-\alpha^{3i}$ for $i=1,\ldots,4$.
      One also finds that the polynomial $x^6-\alpha$ has no non-trivial right divisors.
      Now we may also use Corollary~\ref{C-hhataghat} and conclude that also $x^6-\alpha^{-1}$ has no non-trivial right divisors.
      Since $\alpha^{-1}=\alpha^{14}$, we conclude that $x^6-a$, where~$a$ is any element in the last two cosets has no non-trivial right
      divisors.
\item Let $\F=\F_9$ and $\theta$ be the Frobenius map.
      Let~$n=4$. Then $\theta^4=\text{id}$ and thus $\im(\vartheta)=\{1\}$.
      An exhaustive search shows that~$x^4-1$ has~$12$ monic right divisors, whereas $x^4-2$ has~$36$ such divisors.
\end{arabiclist}
\end{exa}

\section{Circulants of right divisors of $x^n-a$}\label{S-Circ2}
As before, we consider the skew-polynomial ring $\cR:=\F[x;\theta]$ for some fixed $\theta\in\Aut(\F)$.
Recall from the paragraph right after Remark~\ref{R-PropCircMat} that in general $x^n-a=hg$ does not imply
$\CircMatat(\coset{h})\CircMatat(\coset{g})=0$.
In this section we will prove instead a specific product formula for circulants of right divisors of~$x^n-a$ that will be
sufficient for our investigation of skew-constacyclic codes.
Moreover, we will show that the transpose of such circulants is a circulant again.

Throughout, let $a\in\F^*$.
In order to compute modulo the left ideal $\lideal{x^n-a}$ we will need the following lemma.
\begin{lemma}\label{L-xpowers}
In the left~$\cR$-module $\cS_a=\cR/\lideal{x^n-a}$ we have
\[
   \coset{x^{tn+j}}=\Big(\prod_{l=0}^{t-1}\theta^{ln+j}(a)\Big)\coset{x^j}\ \text{ for all }t\in\N,\,j=0,\ldots,n-1.
\]
\end{lemma}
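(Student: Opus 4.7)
The plan is to prove the identity by a straightforward induction on $t\in\N$, exploiting only two facts: the commutation rule $x^jc=\theta^j(c)x^j$ for $c\in\F$, and the fact that $\cR(x^n-a)$ is a left ideal, which implies $\coset{r\cdot(x^n-a)}=0$, hence $\coset{rx^n}=\coset{ra}$ for every $r\in\cR$.

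For the base case $t=1$, I would apply the observation above with $r=x^j$: since $x^j\cdot(x^n-a)\in\lideal{x^n-a}$, we have $\coset{x^{n+j}}=\coset{x^ja}=\coset{\theta^j(a)x^j}=\theta^j(a)\coset{x^j}$, which is exactly the formula because the empty-looking product $\prod_{l=0}^{0}\theta^{ln+j}(a)$ collapses to the single factor $\theta^j(a)$.

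For the inductive step, assuming the identity holds for some $t\ge 1$, I would apply the same principle with $r=x^{tn+j}$: the element $x^{tn+j}(x^n-a)=x^{(t+1)n+j}-\theta^{tn+j}(a)\,x^{tn+j}$ lies in $\lideal{x^n-a}$, whence
\[
\coset{x^{(t+1)n+j}}=\theta^{tn+j}(a)\coset{x^{tn+j}}.
\]
Plugging in the induction hypothesis and using that the scalars $\theta^{ln+j}(a)\in\F$ multiply commutatively then gives $\coset{x^{(t+1)n+j}}=\prod_{l=0}^{t}\theta^{ln+j}(a)\,\coset{x^j}$, completing the induction.

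There is no serious obstacle here; the only point requiring a moment of care is correctly identifying that the reduction step must come from a \emph{left} multiple of $x^n-a$ (since the ideal is a left ideal), which is why one writes $x^{(t+1)n+j}=x^{tn+j}\cdot x^n$ and reduces via the element $x^{tn+j}(x^n-a)$ rather than via $(x^n-a)\cdot x^{tn+j}$.
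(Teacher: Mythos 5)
Your proof is correct and follows essentially the same route as the paper: the base case $\coset{x^{n+j}}=\coset{x^ja}=\theta^j(a)\coset{x^j}$ is identical, and the paper likewise completes the argument by induction on $t$ (which it only sketches, while you spell out the step $\coset{x^{(t+1)n+j}}=\theta^{tn+j}(a)\coset{x^{tn+j}}$ explicitly). No issues.
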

\begin{proof}
For $t=1$ we compute
$\coset{x^{n+j}}=\coset{x^j(x^n-a+a)}=\coset{x^ja}=\coset{\theta^j(a)x^j}$, as desired.
The rest follows similarly using induction on~$t$.
\end{proof}

We now turn to circulants of left multiples of~$\coset{g}$, where~$g$ is a right divisor of $x^n-a$.
Before presenting the general result, let us first compute the circulant of
$x\coset{g}$ in terms of the circulant of~$\coset{g}$.
\begin{exa}\label{E-Circxg}
Let $x^n-a=hg$. Then $x^n=\theta^n(g)h+c$ by Theorem~\ref{T-factors}(2) and where~$c=\gamma(a,g)$; see~\eqref{e-gamma}.
This yields
\[
    \coset{x^ng}=\coset{\theta^n(g)hg+cg}=\coset{cg}\text{ in }\cS_a,
\]
and therefore
\[
   \CircMatat(\coset{xg})
    =\begin{pmatrix}\vv_a(x\coset{g})\\ \vv_a(x^2\coset{g})\\  \vdots \\  \vv_a(x^{n}\coset{g})\end{pmatrix}
     =\begin{pmatrix}   &1& & \\ & &\ddots & \\ & & &1\\ c& & & \end{pmatrix}
        \begin{pmatrix}\vv_a(\coset{g})\\ \vv_a(x\coset{g})\\  \vdots \\  \vv_a(x^{n-1}\coset{g})\end{pmatrix}
     =\begin{pmatrix}   &1& & \\ & &\ddots & \\ & & &1\\ c& & & \end{pmatrix}
        \CircMatat(\coset{g}).
\]
Note that this can be written as $\CircMatat(\coset{xg})=\CircMat{c}(\coset{x})\CircMatat(\coset{g})$, where
$\coset{x}:=x+\lideal{x^n-c}\in\cS_c$, while, as before, $\coset{g}=g+\lideal{x^n-a}\in\cS_a$.
\end{exa}
The product formula for circulants in the previous example can be generalized.
From now on we have to consider circulants for different bases and therefore use the convention that for a circulant
$\CircMat{b}(\coset{f})$ the coset~$\coset{f}$ is taken in~$\cS_b$, thus $\coset{f}=f+\lideal{x^n-b}$.
Recall the notation $\gamma(a,g)$ from~\eqref{e-gamma}.

\begin{theo}\label{T-ProdCirc}
Let $x^n-a=hg$ and $f\in\cR$. Then
\[
   \CircMatat(\coset{fg})=\CircMat{c}(\coset{f})\CircMatat(\coset{g}), \text{ where }c=\gamma(a,g).
\]
\end{theo}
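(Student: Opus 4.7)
The plan is to compare the two matrices row by row, using Proposition~\ref{P-basicCM} to translate the matrix product on the right-hand side into an operation in the module $\cS_a$, and then to reduce everything to a single identity $(x^n-c)g\in\lideal{x^n-a}$ that comes directly out of Theorem~\ref{T-factors}.

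Concretely, I would first write out the rows. By Definition~\ref{D-CircMat} the $i$-th row of $\CircMatat(\coset{fg})$ is $\vva(x^i\coset{fg})=\vva(\coset{x^ifg})$, while the $i$-th row of $\CircMat{c}(\coset{f})$ is $\vv_c(x^i\coset{f})=\vv_c(\coset{r_i})$, where $r_i\in\cR$ is the unique polynomial of degree $<n$ with $x^if=q_i(x^n-c)+r_i$ for some $q_i\in\cR$. Writing $u_i:=\vv_c(x^i\coset{f})\in\F^n$, the $i$-th row of the product $\CircMat{c}(\coset{f})\CircMatat(\coset{g})$ equals $u_i\CircMatat(\coset{g})$. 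Applying Proposition~\ref{P-basicCM} to this vector gives
\[
   \ppa\big(u_i\CircMatat(\coset{g})\big)=\ppa(u_i)\coset{g}\in\cS_a.
\]
Since $u_i$ has entries equal to the coefficients of $r_i$ (a polynomial of degree $<n$), we have $\ppa(u_i)=\coset{r_i}$ in $\cS_a$, so the right-hand side becomes $\coset{r_ig}$. Thus what has to be proved is the single identity $\coset{x^ifg}=\coset{r_ig}$ in $\cS_a$ for every $i$.

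The key step, and the only place where the choice $c=\gamma(a,g)$ is actually used, is the identity
\[
   (x^n-c)g=\theta^n(g)(x^n-a),
\]
which follows immediately from Theorem~\ref{T-factors}(2): that statement gives $x^n-c=\theta^n(g)h$, so $(x^n-c)g=\theta^n(g)\,hg=\theta^n(g)(x^n-a)\in\lideal{x^n-a}$. Combining this with $x^if=q_i(x^n-c)+r_i$ yields
\[
   x^ifg=q_i(x^n-c)g+r_ig=q_i\theta^n(g)(x^n-a)+r_ig,
\]
so indeed $\coset{x^ifg}=\coset{r_ig}$ in $\cS_a$, finishing the row-by-row comparison.

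I do not expect a real obstacle here: the injectivity/additivity of $\CircMatat$ is already in Remark~\ref{R-PropCircMat}, Proposition~\ref{P-basicCM} handles the interaction between matrix multiplication and the module action, and the content of $c=\gamma(a,g)$ is precisely that $x^n-c$ left-annihilates $\coset{g}$ in $\cS_a$. The one subtlety worth flagging is that $u_i$ is defined via the map $\vv_c$ (coset modulo $x^n-c$) but then immediately reinterpreted via $\ppa$ (coset modulo $x^n-a$); this is legitimate because both maps are simply the inverse of the $\F$-linear assignment of coefficient vectors to polynomials of degree less than $n$, so the ambiguity lies only in which quotient the resulting polynomial is projected into, and $r_i$ has degree $<n$ in any case.
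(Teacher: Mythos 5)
Your proof is correct, and it takes a genuinely different (and somewhat more streamlined) route than the paper's. The paper first reduces to the case $f=x^i$ with $0\le i<n$ (using the additivity and $\Fix_\F(\theta)$-semilinearity of Remark~\ref{R-PropCircMat} together with Lemma~\ref{L-xpowers}), then computes the product $\CircMat{c}(\coset{x^i})\CircMatat(\coset{g})$ explicitly as a reordering and rescaling of the rows of $\CircMatat(\coset{g})$, and finally identifies the displaced rows via $\coset{\theta^l(c)x^lg}=\coset{x^{n+l}g}$, which rests on $x^n-c=\theta^n(g)h$. You instead treat a general $f$ uniformly, row by row: Proposition~\ref{P-basicCM} turns the $i$-th row of the product into $\vva(\coset{r_ig})$, where $r_i$ is the remainder of $x^if$ upon right division by $x^n-c$, and the entire theorem collapses to the single ideal membership $(x^n-c)g=\theta^n(g)hg=\theta^n(g)(x^n-a)\in\lideal{x^n-a}$. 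Both arguments ultimately draw on the same input, Theorem~\ref{T-factors}(2), but yours trades the explicit matrix manipulation and the two reduction steps for a heavier use of Proposition~\ref{P-basicCM}, which isolates cleanly the one place where $c=\gamma(a,g)$ is needed. The subtlety you flag at the end --- that the row $u_i$ is produced by $\vv_c$ but then fed into $\ppa$ --- is exactly the right point to address, and your resolution (both maps are the same coefficient-vector correspondence on representatives of degree less than $n$) is correct.
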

Note that if $|\theta|$ divides~$n$, then $c=a$ and thus
$\CircMatat(\coset{fg})=\CircMatat(\coset{f})\CircMatat(\coset{g})$ for all $f\in\cR$.
For the case where $x^n-a$ is central we have proven the same formula already for general~$g$ in
Theorem~\ref{T-centralCirc}.

\begin{proof}
Due to Remark~\ref{R-PropCircMat} it suffices to show the statement for $f=x^i$ for any $i\in\N_0$.
Write $i=tn+j$, where $0\leq j<n$.
Then Lemma~\ref{L-xpowers} yields $\coset{x^i}=d\coset{x^j}$, where $d:=\prod_{l=0}^{t-1}\theta^{ln+j}(c)$.
Thus, again Remark~\ref{R-PropCircMat}(b) shows that we may restrict ourselves to the case $0\leq i<n$.
Now we compute
\[
  \CircMat{c}(\coset{x^i})\CircMatat(\coset{g})
  =\!\begin{pmatrix} & & & &1& & & \\ & & & & &1 & & \\ & & & & & &\ddots & \\ & & & & & & &1\\
       \!c\!& & & & & & & \\  &\!\theta(c)\!& & & & & & \\  & &\!\ddots\!& & & & & \\   & & &\!\theta^{i-1}(c)\!& & & &
  \end{pmatrix}\!
  \begin{pmatrix}\vva(\coset{g})\\\vva(\coset{xg})\\ \vdots\\\!\vva(\coset{x^{i-1}g})\!\\\vva(\coset{x^{i}g})
          \\\!\vva(\coset{x^{i+1}g})\!\\ \vdots\\ \vva(\coset{x^{n-1}g})\end{pmatrix}
  =
  \begin{pmatrix}\vva(\coset{x^ig})\\ \vva(\coset{x^{i+1}g})\\ \vdots\\ \vva(\coset{x^{n-1}g})\\
          \vva(\coset{cg})\\  \vva(\coset{\theta(c)xg})\\ \vdots\\  \vva(\coset{\theta^{i-1}(c)x^{i-1}g})\end{pmatrix}.
\]
Using $\theta^l(c)x^l=x^lc$ as well as $c=x^n-\theta^n(g)h$ from Theorem~\ref{T-factors}(2), the
cosets modulo the left ideal $\lideal{x^n-a}$ satisfy
$\coset{\theta^{l}(c)x^{l}g}=\coset{x^lcg}=\coset{x^l(x^ng-\theta^n(g)hg)}=\coset{x^{n+l}g}$ for $l=0,\ldots,i-1$.
Hence the last matrix is $\CircMatat(\coset{x^ig})$, which is what we wanted.
\end{proof}

The leftmost matrix in above identity will be needed again. Clearly this matrix is invertible, and one easily verifies that
\begin{equation}\label{e-Mxi}
  \Big(\CircMat{b}(\coset{x^i})\T\Big)^{-1}=\CircMat{b^{-1}}(\coset{x^i}) \text{ for all }i=0,\ldots,n-1\text{ and any }b\in\F^*.
\end{equation}

Before we move on to discuss the transpose of a circulant, we take a brief digression and consider the situation of
Proposition~\ref{P-RightDiv} again.
\begin{theo}\label{T-MonomEquiv}
Let $x^n-a=hg$ and $b\in\F^*$. Then $gb\rmid (x^n-\hat{a})$, where $\hat{a}=\gamma(a,b^{-1})=ab\theta^n(b^{-1})$, and
\[
   \CircMat{\hat{a}}(\coset{gb})=\CircMatat(\coset{g})\CircMat{\hat{a}}(\coset{b}).
\]
As a consequence, the skew-constacyclic codes $\vva(\lideal{\coset{g}})$ and $\vv_{\hat{a}}(\lideal{\coset{gb}})$ are scale-equivalent,
that is, they differ only by rescaling each codeword coordinate with a fixed nonzero constant.
In particular, the codes have the same Hamming weight enumerator and Hamming distance.
\end{theo}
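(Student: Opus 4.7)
The plan is to read the entire statement as a direct consequence of the product formula in Theorem~\ref{T-ProdCirc}, together with the explicit description~\eqref{e-CircMatConst} of circulants of constant cosets. The divisibility $gb\rmid(x^n-\hat{a})$ is already the content of Proposition~\ref{P-RightDiv}, so I would simply invoke it (its underlying identity being the short verification $\theta^n(b^{-1})(x^n-a)b = x^n-ab\theta^n(b^{-1}) = x^n-\hat{a}$). So the main content is proving the matrix identity and then deducing scale-equivalence.

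For the matrix identity, I would apply Theorem~\ref{T-ProdCirc} to the ambient polynomial $x^n-\hat{a}$, taking its right divisor to be the constant $b\in\F^*$ and the ``outer'' polynomial to be $g$. Here $b$ does divide $x^n-\hat{a}$ on the right, by the explicit factorization $x^n-\hat{a} = \bigl(\theta^n(b^{-1})x^n - \hat{a}b^{-1}\bigr)\cdot b$. Theorem~\ref{T-ProdCirc} then gives
\[
   \CircMat{\hat{a}}(\coset{gb}) = \CircMat{c}(\coset{g})\,\CircMat{\hat{a}}(\coset{b}),
\]
where $c=\gamma(\hat{a},b) = \hat{a}\,b^{-1}\,\theta^n(b)$. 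Substituting the definition of~$\hat{a}$ and using that $\F$ is commutative,
\[
   c = a\,b\,\theta^n(b^{-1})\cdot b^{-1}\cdot \theta^n(b) = a\,(bb^{-1})\,\bigl(\theta^n(b^{-1})\theta^n(b)\bigr) = a,
\]
so the right-hand side collapses to $\CircMatat(\coset{g})\,\CircMat{\hat{a}}(\coset{b})$, which is exactly the claimed identity. The only mildly delicate step is this cancellation of the conjugation factor, and it hinges on commutativity of~$\F$; otherwise I do not anticipate an obstacle.

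For the scale-equivalence, I would use~\eqref{e-CircMatConst} to note that
\[
   \CircMat{\hat{a}}(\coset{b}) = \mathrm{diag}\bigl(b,\theta(b),\ldots,\theta^{n-1}(b)\bigr)
\]
is an invertible diagonal matrix over~$\F$. The matrix identity then says that $\CircMat{\hat{a}}(\coset{gb})$ is obtained from $\CircMatat(\coset{g})$ by multiplying the $j$-th column by the nonzero scalar $\theta^j(b)$. By Proposition~\ref{P-basicCM} the row spaces of these two matrices are the codes $\vv_{\hat{a}}(\lideal{\coset{gb}})$ and $\vva(\lideal{\coset{g}})$ respectively, and right-multiplication of a generator matrix by a nonsingular diagonal matrix realizes a coordinate-wise rescaling of all codewords by fixed nonzero constants, which is precisely the definition of scale-equivalence. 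Since coordinate-wise rescaling by nonzero constants preserves supports, the Hamming weight of every codeword is unchanged, and in particular both the Hamming weight enumerator and the Hamming distance are preserved.
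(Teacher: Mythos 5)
Your proposal is correct and follows essentially the same route as the paper: cite Proposition~\ref{P-RightDiv} for the divisibility, observe that $b\rmid(x^n-\hat{a})$ with $\gamma(\hat{a},b)=a$ so that Theorem~\ref{T-ProdCirc} gives the matrix identity, and deduce scale-equivalence from the fact that $\CircMat{\hat{a}}(\coset{b})$ is a nonsingular diagonal matrix. You merely spell out the short computations that the paper leaves implicit.
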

\begin{proof}
The first statement is due to Proposition~\ref{P-RightDiv}.
As for the circulants, we have trivially $b\rmid (x^n-\hat{a})$ and $\gamma(\hat{a},b)=a$.
Thus Theorem~\ref{T-ProdCirc} yields the desired identity.
The scale-equivalence follows from the fact that $\CircMat{\hat{a}}(\coset{b})$ is a non-singular diagonal matrix.
\end{proof}

\begin{exa}\label{E-MonomEquiv}
Consider the situation of Example~\ref{E-RightDiv}(1); hence the map~$\vartheta$ from~\eqref{e-vartheta} is surjective.
The above tells us that it suffices to study $\theta$-cyclic codes, and thus the right divisors of~$x^n-1$, because each
$(\theta,a)$-constacyclic code is scale-equivalent to a $\theta$-cyclic one.
\end{exa}

We return now to general circulants and show that if~$g$ is a right divisor of~$x^n-a$ then the transpose of
$\CircMatat(\coset{g})$ is a circulant, see~(1) below.
While this is an interesting result by itself, for us the version in~(2) relating the transpose to a different circulant is more powerful.
This is so because the polynomial $a\widehat{g}^{r}$ appearing in~(2) is a right divisor of $x^n-\theta^k(c^{-1})$, see Corollary~\ref{C-hhataghat}(b),
while~$g^\#$ in~(1) is not a right divisor of $x^n-c^{-1}$ (not even in the classical commutative case and with $a=c=1$).
As for Part~(2) below note that left multiplication of $\CircMatat(\coset{g})$ by $\CircMat{c}(\coset{x^k})$ is simply a reordering
and rescaling of the rows of~$\CircMatat(\coset{g})$; see the proof of Theorem~\ref{T-ProdCirc}.

\begin{theo}\label{T-CircTrans}
Let $x^n-a=hg$, where $\deg(h)=k$, and let~$c=\gamma(a,g)$.
As in Corollary~\ref{C-hhataghat} let $\widehat{g}^{r}=\rho_r(\theta^n(g))$ and $\widehat{h}^{^l}=\rho_l(\theta^{-n}(h))$.
Then
\begin{arabiclist}
\item $M_a^{\theta}(\coset{g})\T=M_{c^{-1}}^{\theta}(\coset{g^\#})$, where $g^\#=a\widehat{g}^{r}x^k-cg_0(x^n-c^{-1})$,
\item $\CircMat{c}(\coset{x^k})\CircMatat(\coset{g})=\CircMat{\theta^k(c^{-1})}(\coset{a\widehat{g}^{r}})\T$,
\item $\CircMat{\theta^{k-n}(c^{-1})}(\coset{x^{n-k}})\CircMat{a^{-1}}(\coset{\widehat{h}^{^l}})=\CircMat{c}(\coset{a^{-1}h})\T$.
\end{arabiclist}
\end{theo}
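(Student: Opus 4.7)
The plan is to verify parts~(1) and~(2) by direct entry-wise computation using the explicit formula~\eqref{e-Mij} together with the identity $cg_t=\theta^t(a)\theta^n(g_t)$ from Corollary~\ref{C-Coeffg}, and then to derive part~(3) by applying~(2) to the reciprocal factorization of $x^n-a^{-1}$ furnished by Corollary~\ref{C-hhataghat}(c).

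For part~(1), I first rewrite $g^\#$ in a useful form. The leading coefficient of $a\widehat{g}^{r}x^k$ at degree~$n$ equals $a\theta^n(g_0)=cg_0$ by Corollary~\ref{C-Coeffg} at $t=0$, so the correction term $-cg_0(x^n-c^{-1})$ cancels this top term and contributes $g_0$ to the constant coefficient. One then checks that $g^\#=g_0+\sum_{l=k}^{n-1}a\theta^l(g_{n-l})x^l$ has degree $<n$. Using~\eqref{e-Mij}, the $(i,j)$-entry of $M_a^{\theta}(\coset{g})\T$ is $\theta^j(g_{i-j})$ when $j\le i$ and $\theta^i(a)\theta^j(g_{n+i-j})$ when $j>i$, while the $(i,j)$-entry of $M_{c^{-1}}^{\theta}(\coset{g^\#})$ is $\theta^i(g^\#_{j-i})$ when $i\le j$ and $\theta^j(c^{-1})\theta^i(g^\#_{n+j-i})$ when $i>j$. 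After substituting the explicit coefficients of $g^\#$ and splitting the comparison according as $|i-j|$ lies in $\{0\}$, $[1,k-1]$, or $[k,n-1]$, the zero entries match trivially (because $g_l=0$ for $l>n-k$ and $g^\#_l=0$ for $l\in[1,k-1]$) while each nonzero comparison reduces, after canceling common $\theta$-powers, to Corollary~\ref{C-Coeffg} at $t=|i-j|$.

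For part~(2), I apply Theorem~\ref{T-ProdCirc} to rewrite the left-hand side as $\CircMatat(\coset{x^k g})$. Using Lemma~\ref{L-xpowers}, the polynomial $x^k g=\sum_{i=0}^{n-k}\theta^k(g_i)x^{k+i}$ reduces modulo $\lideal{x^n-a}$ to the degree-$<n$ representative $\tilde g=a\theta^k(g_{n-k})+\sum_{l=k}^{n-1}\theta^k(g_{l-k})x^l$. Unwinding $\widehat{g}^{r}=\sum_{j=0}^{n-k}\theta^{k+j}(g_{n-k-j})x^j$ via~\eqref{e-rholrf} and applying~\eqref{e-Mij} to both $\CircMatat(\coset{\tilde g})$ and $\CircMat{\theta^k(c^{-1})}(\coset{a\widehat{g}^{r}})\T$, the entry comparison proceeds in the same style as in part~(1) and once more reduces to Corollary~\ref{C-Coeffg}.

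For part~(3), I apply part~(2) to the factorization $x^n-a^{-1}=(-\widehat{g}^{r}\theta^{k-n}(c^{-1}))\widehat{h}^{^l}$ from Corollary~\ref{C-hhataghat}(c). A short calculation with Corollary~\ref{C-Coeffg} at $t=k$ gives $\gamma(a^{-1},\widehat{h}^{^l})=\theta^{k-n}(c^{-1})$, and Proposition~\ref{P-basics}(d),~(e),~(f) yield $\widehat{\widehat{h}^{^l}}^{r}=h$. Since $\deg\widehat{h}^{^l}=k$, the ``new'' degree parameter in the application of~(2) is $n-k$, so the scalar $\theta^{n-k}(\theta^{k-n}(c))=c$ appears as the subscript on the right-hand side while the coset becomes $a^{-1}h$; substituting the new data into~(2) then produces exactly~(3). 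The main obstacle throughout is the bookkeeping of the many $\theta$-shifts generated by the transpose, by the reflections $\rho_l$ and $\rho_r$, and by the twisted multiplication in $\cR$; Corollary~\ref{C-Coeffg} is the universal tool that absorbs these shifts and forces the entries to match.
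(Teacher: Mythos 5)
Your proposal is correct and follows the paper's proof in its overall architecture: part~(1) is done by the same direct entry-wise comparison via~\eqref{e-Mij}, with the same explicit form $g^\#=g_0+\sum_{l=k}^{n-1}a\theta^l(g_{n-l})x^l$ and the same reduction of the nontrivial entries to the identity $cg_t=\theta^t(a)\theta^n(g_t)$ of Corollary~\ref{C-Coeffg}; part~(3) is obtained exactly as in the paper by feeding the factorization of $x^n-a^{-1}$ from Corollary~\ref{C-hhataghat}(c) into part~(2), with the same computations $\gamma(a^{-1},\widehat{h}^{^l})=\theta^{k-n}(c^{-1})$ and $a^{-1}\rho_r(\theta^n(\widehat{h}^{^l}))=a^{-1}h$. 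The one place you diverge is the second half of part~(2): after both arguments invoke Theorem~\ref{T-ProdCirc} to rewrite the left-hand side as $\CircMatat(\coset{x^kg})$, the paper identifies the right-hand side by applying part~(1) to the right divisor $a\widehat{g}^{r}$ of $x^n-\theta^k(c^{-1})$ (computing $\gamma(\theta^k(c^{-1}),a\widehat{g}^{r})=a^{-1}$) and then verifying the coset identity $\coset{(a\widehat{g}^{r})^\#}=\coset{x^kg}$ in $\cS_a$ via the reciprocal-map calculus of Proposition~\ref{P-basics} and~\eqref{e-thetahg}, whereas you perform a second direct entry-wise comparison between the circulant of the reduced representative of $x^kg$ and $\CircMat{\theta^k(c^{-1})}(\coset{a\widehat{g}^{r}})\T$. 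Both routes work (I checked that your comparison closes in all three index ranges, again via Corollary~\ref{C-Coeffg}); the paper's version buys a shorter, more structural argument by reusing part~(1), while yours avoids the coset identity at the cost of one more coefficient bookkeeping pass. I see no gaps.
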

\begin{proof}
(1) Write $g=\sum_{i=0}^{n-k}g_ix^i$ and set $g_i=0$ for $i=n-k+1,\ldots,n-1$.
Due to~\eqref{e-Mij} we have $\CircMatat(\coset{g})=(M_{ij})_{i,j=0,\ldots,n-1}$, where
\begin{equation}\label{e-Mgij}
    M_{ij}=\left\{\begin{array}{ll} \theta^i(g_{j-i}),&\text{if }i\leq j,\\[.7ex]
                           \theta^j(a)\theta^i(g_{n+j-i}),&\text{if }i>j.\end{array}\right.
\end{equation}
On the other hand, $\widehat{g}^{r}=\rho_r(\theta^n(g))=\sum_{i=0}^{n-k}\theta^{i+k}(g_{n-k-i})x^i$, and thus
$a\widehat{g}^{r}x^k=\sum_{i=k}^na\theta^i(g_{n-i})x^i$.
Using that $cg_0=a\theta^n(g_0)$, this leads to
\[
  g^\#= \sum_{i=0}^{n-1}s_ix^i,\ \text{ where }s_0=g_0\text{ and }s_i=a\theta^i(g_{n-i})\text{ for }i>0.
\]
Note that $s_i=0$ for $i=1,\ldots,k-1$.
By~\eqref{e-Mij}, $\CircMat{c^{-1}}(\coset{g^\#})=(P_{ij})_{i,j=0,\ldots,n-1}$, where
\[
   P_{ij}=\left\{\begin{array}{ll} \theta^i(s_{j-i})=\theta^i(a)\theta^j(g_{n-j+i}),&\text{ if }i< j,\\[.7ex]
                             \theta^i(s_0)=\theta^i(g_0),&\text{ if }i=j,\\[.7ex]
                             \theta^j(c^{-1})\theta^i(s_{n+j-i})=\theta^j(c^{-1})\theta^i(a)\theta^{n+j}(g_{i-j}),&\text{ if }i>j.
   \end{array}\right.
\]
This shows immediately that $P_{ij}=M_{ji}$ for all $i\leq j$.
The remaining case, that is, $P_{ij}=M_{ji}$ for $i>j$, is equivalent to the identities $g_t=c^{-1}\theta^t(a)\theta^n(g_t)$
for all $t:=i-j>0$.
But the latter have been established in Corollary~\ref{C-Coeffg}.
\\
(2) On the one hand, $\CircMat{c}(\coset{x^k})\CircMatat(\coset{g})=\CircMatat(\coset{x^kg})$ due to Theorem~\ref{T-ProdCirc}.
On the other hand,  for $\CircMat{\theta^k(c^{-1})}(\coset{a\widehat{g}^{r}})\T$ we may use part~(1) because~$a\widehat{g}^{r}$ is a right divisor of
$x^n-\theta^k(c^{-1})$ due to Corollary~\ref{C-hhataghat}(b).
Thus $\CircMat{\theta^k(c^{-1})}(\coset{a\widehat{g}^{r}})\T=\CircMat{b^{-1}}(\coset{(a\widehat{g}^{r})^\#})$, where
$b=\gamma\big(\theta^k(c^{-1}),a\widehat{g}^{r}\big)$
and~$(a\widehat{g}^{r})^\#$ is according to~(1).
The constant coefficient of $a\widehat{g}^{r}$ is $a\theta^k(g_{n-k})$ and hence
\begin{equation}\label{e-bainv}
  b=\gamma\big(\theta^k(c^{-1}),a\widehat{g}^{r}\big)=\theta^k(c^{-1})a^{-1}\theta^k(g_{n-k}^{-1})\theta^n(a)\theta^{n+k}(g_{n-k})=a^{-1},
\end{equation}
where the last step follows from the fact that the product of the last three factors is~$\theta^k(c)$ due to Corollary~\ref{C-Coeffg}.
All of this shows that $\CircMat{\theta^k(c^{-1})}(\coset{a\widehat{g}^{r}})\T=\CircMat{a}(\coset{(a\widehat{g}^{r})^\#})$, and it remains to prove that
$\coset{(a\widehat{g}^{r})^\#}=\coset{x^kg}$ in~$\cS_a$.
By definition, $\coset{(a\widehat{g}^{r})^\#}=\coset{\theta^k(c^{-1})\rho_r(\theta^n(a\widehat{g}^{r}))x^k}$.
Making use of Proposition~\ref{P-basics}(d),(f),(h) we compute
\begin{align*}
  \rho_r(\theta^n(a\widehat{g}^{r}))x^k&=\rho_r\big(\theta^n(a)\rho_r(\theta^{2n}(g))\big)x^k
      =\rho_r\circ\rho_r(\theta^{2n}(g))\theta^{k-n}(\theta^n(a))x^k\\
     &=\theta^{k-n}(\theta^{2n}(g))\theta^k(a)x^k=x^k\theta^n(g)a.
\end{align*}
Now~\eqref{e-thetahg} leads to $\theta^k(c^{-1})\rho_r(\theta^n(a\widehat{g}^{r}))x^k=x^kc^{-1}\theta^n(g)a=x^kg$, as desired.
\\
(3) follows from~(2): first $\widehat{h}^{^l}$ is a right divisor of~$x^n-a^{-1}$ due to Corollary~\ref{C-hhataghat}(c);
secondly $\gamma(a^{-1},\widehat{h}^{^l})=a^{-1}\big(\widehat{h}^{^l}_0\big)^{-1}\theta^n(\widehat{h}^{^l}_0)=\theta^{k-n}(c^{-1})$ due to Corollary~\ref{C-Coeffg} and because
$\widehat{h}^{^l}_0=\theta^{-n}(h_k)$; and finally
$a^{-1}\widehat{\widehat{h}^{^l}}^r=a^{-1}\rho_r(\theta^n(\rho_l(\theta^{-n}(h))))=a^{-1}h$, as desired.
\end{proof}

Theorems~\ref{T-ProdCirc} and~\ref{T-CircTrans}, true for right divisors~$g$ of~$x^n-a$, do not hold for more general polynomials.

\begin{exa}\label{E-ProdRule}
Let $\cR=\F_8[x;\theta]$, where~$\theta$ is the Frobenius homomorphism, thus $\theta(\lambda)=\lambda^2$ for all
$\lambda\in\F_8$. Let~$\alpha\in\F_8^*$ be the primitive element satisfying $\alpha^3=\alpha+1$.
Consider the polynomial $f:=x^5-\alpha^2$, hence $n=5$ and $a=\alpha^2$.
Then $h:=\alpha^6+x+\alpha^2x^2+\alpha^6x^3+x^4$ is a left divisor of~$f$, but not a right divisor.
In this case $\CircMatat(\coset{h})$ is in $\text{GL}_5(\F_8)$, and one can easily check that
$\CircMatat(\coset{xh})\CircMatat(\coset{h})^{-1}$ is not a circulant of the form $\CircMat{b}(\coset{s})$ for any~$s\in\cR$
and any~$b\in\F_8^*$.
This means that there is no identity of the form $\CircMatat(\coset{xh})=\CircMat{b}(\coset{s})\CircMatat(\coset{h})$, illustrating that
Theorem~\ref{T-ProdCirc} does not generalize.
Moreover, the transpose $\CircMatat(\coset{h})\T$ is not a circulant either.
\end{exa}

\begin{theo}\label{T-MgMh}
Let $x^n-a=hg$, and as in Corollary~\ref{C-hhataghat} let $\widehat{h}^{^l}=\rho_l(\theta^{-n}(h))$. Then
\[
   \CircMatat(\coset{g})\CircMat{c}(\coset{a^{-1}h})= \CircMatat(\coset{g})\CircMat{a^{-1}}(\coset{\widehat{h}^{^l}})\T=0,\text{ where }
   c=\gamma(a,g).
\]
\end{theo}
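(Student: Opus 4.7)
The plan is to deduce both equalities from the product formula of Theorem~\ref{T-ProdCirc}, applied this time to the factorization $x^n-c=[\theta^n(g)a]\,(a^{-1}h)$, together with the identity $ga^{-1}h=c^{-1}(x^n-c)$ supplied by Corollary~\ref{C-hhataghat}(a).

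First I observe that $a^{-1}h$ is a right divisor of $x^n-c$: by Theorem~\ref{T-factors}(2) we have $x^n-c=\theta^n(g)h=[\theta^n(g)a]\,(a^{-1}h)$. Hence Theorem~\ref{T-ProdCirc}, with the roles of $a,g$ played by $c,a^{-1}h$, yields
\[
   \CircMat{c}(\coset{f\cdot a^{-1}h})=\CircMat{c'}(\coset{f})\,\CircMat{c}(\coset{a^{-1}h})\quad\text{for every }f\in\cR,
\]
where $c':=\gamma(c,a^{-1}h)$. The one nontrivial calculation is to verify that $c'=a$. Using the constant coefficient $a^{-1}h_0$ of $a^{-1}h$ and the commutativity of $\F$,
\[
  c'=c(a^{-1}h_0)^{-1}\theta^n(a^{-1}h_0)=\bigl[c h_0^{-1}\theta^n(h_0)\bigr]\,a\theta^n(a^{-1}).
\]
Comparing constant coefficients in the second identity of~\eqref{e-thetahg} gives $\theta^n(a)h_0=\theta^n(h_0)c$, hence $ch_0^{-1}\theta^n(h_0)=\theta^n(a)$, and therefore $c'=\theta^n(a)\cdot a\theta^n(a^{-1})=a$.

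With $c'=a$ the first equality is then immediate: setting $f=g$ in the product formula and applying Corollary~\ref{C-hhataghat}(a),
\[
  \CircMatat(\coset{g})\CircMat{c}(\coset{a^{-1}h})=\CircMat{c}(\coset{ga^{-1}h})=\CircMat{c}(\coset{c^{-1}(x^n-c)})=0,
\]
since $c^{-1}(x^n-c)\in\lideal{x^n-c}$ so its coset in~$\cS_c$ is zero.

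For the second equality I would appeal to Theorem~\ref{T-CircTrans}(3), which expresses $\CircMat{c}(\coset{a^{-1}h})\T$ as the product $\CircMat{\theta^{k-n}(c^{-1})}(\coset{x^{n-k}})\,\CircMat{a^{-1}}(\coset{\widehat{h}^{^l}})$. Transposing this identity and using the invertibility of $\CircMat{b}(\coset{x^i})\T$ recorded in~\eqref{e-Mxi}, one sees that $\CircMat{c}(\coset{a^{-1}h})$ and $\CircMat{a^{-1}}(\coset{\widehat{h}^{^l}})\T$ differ only by right multiplication by an invertible matrix; consequently $\CircMatat(\coset{g})\CircMat{a^{-1}}(\coset{\widehat{h}^{^l}})\T=0$ is equivalent to what was already proved. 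The sole real obstacle is the bookkeeping verifying $\gamma(c,a^{-1}h)=a$; everything else is a direct assembly of identities already at hand.
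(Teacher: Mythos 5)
Your proposal is correct and follows essentially the same route as the paper: verify that $a^{-1}h$ is a right divisor of $x^n-c$ with $\gamma(c,a^{-1}h)=a$, apply Theorem~\ref{T-ProdCirc} and Corollary~\ref{C-hhataghat}(a) for the first equality, and reduce the second to the first via Theorem~\ref{T-CircTrans}(3). Your coefficient computation for $\gamma(c,a^{-1}h)=a$ is just the $t=0$ case of Corollary~\ref{C-Coeffg}, which is exactly the citation the paper uses, and your expansion of the transpose step is a correct filling-in of what the paper leaves implicit.
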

\begin{proof}
For the first product we aim at using Theorem~\ref{T-ProdCirc} and thus need to check the requirements.
By Theorem~\ref{T-factors}(2) the polynomial $a^{-1}h$ is a right divisor of~$x^n-c$.
Moreover, $\gamma(c,a^{-1}h)=c(a^{-1}h_0)^{-1}\theta^n(a^{-1}h_0)=cah_0^{-1}\theta^n(h_0)\theta^n(a^{-1})=a$ by Corollary~\ref{C-Coeffg}.
Hence we may use Theorem~\ref{T-ProdCirc} and this yields $\CircMatat(\coset{g})\CircMat{c}(\coset{a^{-1}h})=\CircMat{c}(\coset{ga^{-1}h})$.
But the last matrix is zero because $\coset{ga^{-1}h}=\coset{0}$ in~$\cS_c$ due to Corollary~\ref{C-hhataghat}(a).
The rest follows from Theorem~\ref{T-CircTrans}(3).
\end{proof}

\section{The lattices of skew-constacyclic codes}\label{S-lattice}
Let $\cR:=\F[x;\theta]$ for some fixed $\theta\in\Aut(\F)$.
The previous sections lead to the following result, which was first presented and proven in a different form by
Boucher/Ulmer in \cite[Thm.~8]{BoUl09a} and~\cite[Thm.~1]{BoUl11}.
\begin{theo}\label{T-CodesDual}
Let~$a\in\F^*$ and $\cC\subseteq\F^n$ be a $(\theta,a)$-constacyclic code.
Then there exists a unique monic polynomial $g\in\cR$ such that $x^n-a=hg$ for some $h\in\cR$ and
$\cC=\im\CircMatat(\coset{g})=\vva(\lideal{\coset{g}})$.
In this case $\cC^{\perp}$ is $(\theta,a^{-1})$-constacyclic and
$\cC^{\perp}=\im\CircMat{a^{-1}}(\coset{\widehat{h}^{^l}})=\vv_{a^{-1}}(\lideal{\coset{\widehat{h}^{^l}}})$, where $\widehat{h}^{^l}=\rho_l(\theta^{-n}(h))$.
\end{theo}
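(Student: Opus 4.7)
The first assertion is essentially bookkeeping: since $\cC$ is $(\theta,a)$-constacyclic, $\ppa(\cC)$ is a left submodule of $\cS_a$ by Definition~\ref{D-CBC}, so Proposition~\ref{P-submoduleS}(1) supplies a unique monic right divisor $g$ of $x^n-a$ with $\ppa(\cC)=\lideal{\coset{g}}$. Writing $x^n-a=hg$ and applying Proposition~\ref{P-basicCM} yields $\cC=\vva(\lideal{\coset{g}})=\im\CircMatat(\coset{g})$, giving the first half of the theorem.

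For the dual, the plan is to produce a candidate generator for $\cC^\perp$, verify that its circulant is row-orthogonal to $\CircMatat(\coset{g})$, and then match dimensions. The candidate is $\widehat{h}^{^l}=\rho_l(\theta^{-n}(h))$. By Corollary~\ref{C-hhataghat}(c), the identity $-\widehat{g}^{r}\theta^{k-n}(c^{-1})\widehat{h}^{^l}=x^n-a^{-1}$ shows that $\widehat{h}^{^l}$ is a right divisor of $x^n-a^{-1}$. Since $h_0g_0=-a\neq 0$ forces $h_0\neq 0$, the left reciprocal preserves the degree and $\deg(\widehat{h}^{^l})=\deg(h)=k$. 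Theorem~\ref{T-tacyclic} then tells us that $\im\CircMat{a^{-1}}(\coset{\widehat{h}^{^l}})=\vv_{a^{-1}}(\lideal{\coset{\widehat{h}^{^l}}})$ is a $(\theta,a^{-1})$-constacyclic code of dimension $n-k$.

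The crucial orthogonality is furnished by Theorem~\ref{T-MgMh}, which gives $\CircMatat(\coset{g})\CircMat{a^{-1}}(\coset{\widehat{h}^{^l}})\T=0$. Reading this identity row by row, every row of $\CircMat{a^{-1}}(\coset{\widehat{h}^{^l}})$ lies in $\cC^\perp$, so $\im\CircMat{a^{-1}}(\coset{\widehat{h}^{^l}})\subseteq\cC^\perp$. Since $\dim\cC^\perp=n-k$ matches the dimension computed in the previous paragraph, the inclusion is forced to be an equality, and the $(\theta,a^{-1})$-constacyclicity of $\cC^\perp$ is immediate from its description via $\vv_{a^{-1}}$ of a left submodule of $\cS_{a^{-1}}$. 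The main obstacle is thus identifying $\widehat{h}^{^l}$ as the correct dual generator and confirming both that it divides $x^n-a^{-1}$ and that its degree equals $k$; once Corollary~\ref{C-hhataghat}(c) and Theorem~\ref{T-MgMh} are in hand, everything reduces to this dimension count.
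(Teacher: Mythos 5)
Your proposal is correct and follows essentially the same route as the paper: the first half via Proposition~\ref{P-submoduleS} / Theorem~\ref{T-tacyclic} and Proposition~\ref{P-basicCM}, and the dual via the orthogonality relation of Theorem~\ref{T-MgMh} combined with a rank/dimension count (the paper's chain $\rank\CircMatat(\coset{g})=n-\deg(g)=\deg(h)=\deg(\widehat{h}^{^l})=n-\rank\CircMat{a^{-1}}(\coset{\widehat{h}^{^l}})$ is exactly your dimension match, with Corollary~\ref{C-hhataghat}(c) justifying that $\widehat{h}^{^l}$ is a right divisor of $x^n-a^{-1}$). No gaps; your explicit remark that $h_0g_0=-a\neq0$ guarantees $\deg(\widehat{h}^{^l})=\deg(h)$ is a detail the paper leaves implicit.
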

\begin{proof}
The first part about~$\cC$ is in Theorem~\ref{T-tacyclic} and  Proposition~\ref{P-basicCM}.
As for the dual code, note first that
$\rank(\CircMatat(\coset{g}))=n-\deg(g)=\deg(h)=\deg(\widehat{h}^{^l})=n-\rank(\CircMat{a^{-1}}(\coset{\widehat{h}^{^l}}))$.
Since Theorem~\ref{T-MgMh} yields $\CircMatat(\coset{g})\CircMat{a^{-1}}(\coset{\widehat{h}^{^l}})\T=0$ we conclude that
$\im\CircMatat(\coset{g})$ and $\im \CircMat{a^{-1}}(\coset{\widehat{h}^{^l}})$ are mutually dual codes.
\end{proof}

Now we recover \cite[Prop.~13]{BoUl09a} about self-dual codes (see also \cite[Prop.~5]{BoUl14a}).
\begin{cor}\label{C-SelfDual}
If there exists a self-dual $(\theta,a)$-constacyclic code in~$\F^n$, then $n$ is even and $a=\pm1$.
\end{cor}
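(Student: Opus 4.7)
The plan is to combine the duality result from Theorem~\ref{T-CodesDual} with the uniqueness statement from the remark following Definition~\ref{D-CBC}. First, I would recall that a self-dual code $\cC$ satisfies $\cC=\cC^{\perp}$, which forces $\dim\cC=n/2$. This immediately gives that $n$ is even (and rules out the trivial cases $\cC=\{0\}$ and $\cC=\F^n$, so that in particular $\{0\}\subsetneq\cC\subsetneq\F^n$).

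Next I would apply Theorem~\ref{T-CodesDual}: since $\cC$ is $(\theta,a)$-constacyclic, its dual $\cC^{\perp}$ is $(\theta,a^{-1})$-constacyclic. Using $\cC=\cC^{\perp}$, the code $\cC$ is therefore simultaneously $(\theta,a)$-constacyclic and $(\theta,a^{-1})$-constacyclic.

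Now the observation noted just after Definition~\ref{D-CBC} (as an immediate consequence of Proposition~\ref{P-submoduleS}(1)) says that any proper nonzero subspace of $\F^n$ is $(\theta,b)$-constacyclic for at most one value of $b\in\F^*$. Applied here, this forces $a=a^{-1}$, i.e.\ $a^2=1$, hence $a=\pm 1$.

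There is essentially no obstacle: the result is a short corollary of Theorem~\ref{T-CodesDual} together with the uniqueness of the constant $a$ for a proper nonzero constacyclic subspace, so the only thing to be careful about is to verify that the dimension $n/2$ really places $\cC$ in the range where the uniqueness statement applies.
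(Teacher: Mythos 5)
Your proof is correct and follows exactly the route the paper intends: the corollary is stated without proof as an immediate consequence of Theorem~\ref{T-CodesDual} (which makes $\cC^{\perp}$ a $(\theta,a^{-1})$-constacyclic code) combined with the uniqueness of the constant for a proper nonzero constacyclic subspace noted after Definition~\ref{D-CBC}. Your extra care in checking that $\dim\cC=n/2$ places $\cC$ strictly between $\{0\}$ and $\F^n$ is exactly the right detail to verify.
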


We are now in a position to formulate the interplay between right divisors of~$x^n-a$ and the associated codes as well as
their duals in terms of lattice (anti-)isomorphisms.
For $a\in\F^*$ define the sets
\begin{align*}
   \cD_a&:=\{g\in\cR\mid g\rmid (x^n-a),\, g\text{ monic}\},\\[.6ex]
   \cI_a&:=\{I\subseteq\cS_a\mid I\text{ is a submodule of }\cS_a\},\\[.6ex]
   \cT_a&:=\{\cC\subseteq\F^n\mid \cC\text{ is $(\theta,a)$-constacyclic}\}.
\end{align*}
Clearly, $(\cD_a,\,\rmid\,),\ (\cI_a,\,\subseteq\,),\ (\cT_a,\,\subseteq\,)$ are lattices.
Consider the maps
\begin{equation}\label{e-maps1}
\begin{array}{ccccccc}
   \cD_a &\stackrel{\sigma_a}{\longrightarrow} &\cT_a&\stackrel{\ppa}{\longrightarrow}& \cI_a\\[.5ex]
     g   &\longmapsto& \im \CircMatat(\coset{g})&\longmapsto& \ppa\big(\im \CircMatat(\coset{g})\big)
\end{array}
\end{equation}
Because of Corollary~\ref{C-CircSpaces}(a) and Theorem~\ref{T-tacyclic}, the map~$\sigma_a$ is a lattice anti-isomorphism,
while~$\ppa$ is a lattice isomorphism thanks to Proposition~\ref{P-basicCM}.

We now turn to the dual situation.
Let $x^n-a=hg$ with monic polynomials $g,h\in\cR$.

\begin{theo}\label{T-DualCode}
Define the map $\delta_a:\cD_a\longrightarrow \cD_{a^{-1}},\ g\longmapsto \theta^{-\deg(g)}(-a^{-1}g_0)\widehat{h}^{^l}$, where~$g_0$
is the constant coefficient of~$g$ and, as before, set $\widehat{h}^{^l}:=\rho_l\big(\theta^{-n}(h)\big)$.
Moreover, define $\tau_a:\cT_a\longrightarrow\cT_{a^{-1}},\ \cC\longmapsto \cC^{\perp}$, and let~$\sigma_a$ be
as in~\eqref{e-maps1}.
Consider the diagram
\[
\begin{array}{l}
   \begin{xy}
   (0,18)*+{\cD_a}="a"; (18,18)*+{\;\cD_{a^{-1}}}="b";%
   (0,0)*+{\cT_a}="c"; (18,0)*+{\;\cT_{a^{-1}}}="d";
   {\ar "a";"b"}?*!/_2mm/{\delta_a\;};
   {\ar "c";"d"}?*!/_2mm/{\tau_a\;};
   {\ar "b";"d"}?*!/_4mm/{\,\sigma_{a^{-1}}};%
   {\ar "a";"c"}?*!/_4mm/{\mbox{}\!\!\!\!\sigma_a};%
   \end{xy}
\end{array}
\]
Then all maps are lattice anti-isomorphisms and the diagram commutes.
In other words, if $\cC=\im\CircMatat(\coset{g})$ for some $g\in\cD_a$, then
$\cC^{\perp}=\im\CircMat{a^{-1}}(\coset{\delta_a(g)})=\im\CircMat{a^{-1}}(\coset{\widehat{h}^{^l}})$.
\end{theo}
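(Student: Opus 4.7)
The plan is to reduce everything to commutativity of the square, since three of the four maps are known to be lattice anti-isomorphisms on general grounds, and commutativity then forces $\delta_a$ to be one as well. Specifically, $\sigma_a$ is a lattice anti-isomorphism by Corollary~\ref{C-CircSpaces}(a) together with Theorem~\ref{T-tacyclic} (this was already observed just after~\eqref{e-maps1}), and $\tau_a$ is the restriction to $\cT_a$ of the standard orthogonal-complement anti-isomorphism on subspaces of $\F^n$, well-defined into $\cT_{a^{-1}}$ by Theorem~\ref{T-CodesDual}. Once commutativity is established, $\delta_a=\sigma_{a^{-1}}^{-1}\circ\tau_a\circ\sigma_a$ is a composition of three lattice anti-isomorphisms, hence itself a lattice anti-isomorphism.

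First I would verify that $\delta_a$ actually takes values in $\cD_{a^{-1}}$. Corollary~\ref{C-hhataghat}(c) already gives $\widehat{h}^{^l}\rmid(x^n-a^{-1})$, so any nonzero scalar multiple of $\widehat{h}^{^l}$ is still a right divisor; what remains is to check that the prescribed scalar is exactly the inverse of the leading coefficient of $\widehat{h}^{^l}$. Comparing constant terms in $x^n-a=hg$ yields $h_0 g_0=-a$, and~\eqref{e-rholrf} shows that $\widehat{h}^{^l}=\rho_l(\theta^{-n}(h))$ has leading coefficient $\theta^{k-n}(h_0)=\theta^{-\deg(g)}(-ag_0^{-1})$, whose inverse is precisely $\theta^{-\deg(g)}(-a^{-1}g_0)$. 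Thus $\delta_a(g)$ is monic, and belongs to $\cD_{a^{-1}}$.

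Next I would establish commutativity. For $g\in\cD_a$ set $\cC:=\sigma_a(g)=\im M_a^\theta(\coset{g})$. By Theorem~\ref{T-CodesDual} we have $\cC^{\perp}=\im M_{a^{-1}}^{\theta}(\coset{\widehat{h}^{^l}})$. Writing $c':=\theta^{-\deg(g)}(-a^{-1}g_0)\in\F^*$, we have $\coset{\delta_a(g)}=c'\coset{\widehat{h}^{^l}}$ in $\cS_{a^{-1}}$, so the left submodules $\lideal{\coset{\delta_a(g)}}$ and $\lideal{\coset{\widehat{h}^{^l}}}$ coincide (since $c'$ is a unit in $\F$). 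Applying $\vv_{a^{-1}}$ and invoking Proposition~\ref{P-basicCM} then gives
\[
  \sigma_{a^{-1}}(\delta_a(g))=\im M_{a^{-1}}^{\theta}(\coset{\delta_a(g)})=\im M_{a^{-1}}^{\theta}(\coset{\widehat{h}^{^l}})=\cC^{\perp}=\tau_a(\sigma_a(g)),
\]
which is exactly the commutativity of the square.

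The only real obstacle I anticipate is the notational bookkeeping for the scaling constant in $\delta_a$; once one checks that $\theta^{-\deg(g)}(-a^{-1}g_0)$ is indeed the reciprocal of the leading coefficient of $\widehat{h}^{^l}$, everything else is either purely formal or has already been done in Theorem~\ref{T-CodesDual} and Corollary~\ref{C-hhataghat}. The stated identity $\cC^{\perp}=\im M_{a^{-1}}^\theta(\coset{\delta_a(g)})=\im M_{a^{-1}}^\theta(\coset{\widehat{h}^{^l}})$ is then immediate from the commutative square and Theorem~\ref{T-CodesDual}.
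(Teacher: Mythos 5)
Your proposal is correct and follows essentially the same route as the paper: check that $\delta_a(g)$ lands in $\cD_{a^{-1}}$ (via Corollary~\ref{C-hhataghat}(c) and the leading-coefficient computation $\theta^{-\deg(g)}(-ag_0^{-1})$ from $h_0g_0=-a$), deduce commutativity of the square from Theorem~\ref{T-CodesDual} together with the fact that scaling by a unit does not change the generated submodule, and then obtain that $\delta_a$ is a lattice anti-isomorphism because the other three maps are. Your write-up merely makes explicit the verification that the paper leaves as ``one can easily verify.''
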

\begin{proof}
First of all,~$\delta_a(g)$ is indeed a right divisor of $x^n-a^{-1}$ thanks to Corollary~\ref{C-hhataghat}(c), and it is monic because
the leading coefficient of~$\widehat{h}^{^l}$ is $\theta^{-\deg(g)}(-ag_0^{-1})$, as one can easily verify.
Next, Theorem~\ref{T-CodesDual} yields that the diagram commutes.
This in turn implies that~$\delta_a$ is a lattice anti-isomorphism because~$\sigma_a,\,\tau_a,\,\sigma_{a^{-1}}$ are.
\end{proof}

Now we can present the dual lattices to those in Example~\ref{E-FactorsXna}.
\begin{exa}\label{E-FactorsXnadual}
Consider again the field~$\F_8=\F_2[\alpha]$, where $\alpha^3=\alpha+1$, and let~$\theta$ be the Frobenius homomorphism on~$\F_8$.
In Example~\ref{E-FactorsXna} we presented all monic right divisors of $x^7+\alpha$.
Using the map~$\delta_{\alpha}$ we obtain all right divisors of $x^7+\alpha^{-1}=x^7+\alpha^6$.
Setting $\tilde{h}^{(i)}:=\delta_{\alpha}(g^{(i)})$ for $i=0,\ldots,7$, we obtain
\begin{align*}
  &\tilde{h}^{(0)}=x^7+\alpha^6,\quad \tilde{h}^{(1)}=x^6+\alpha^3 x^5+\alpha x^4+x^3+\alpha^3 x^2+\alpha x+1,\\[.6ex]
  &\tilde{h}^{(2)}=x^4+\alpha^2x^2+x+\alpha^6,\quad \tilde{h}^{(3)}=x^4+\alpha^6x^3+\alpha^2x^2+\alpha^6,\quad
    \tilde{h}^{(4)}=x^3+\alpha x+1\\[.6ex]
  &\tilde{h}^{(5)}=x^3+\alpha^3 x^2+1,\quad \tilde{h}^{(6)}=x+\alpha^6 ,\quad \tilde{h}^{(7)}=1.
\end{align*}
From the above we know that $(\cC^{(i)})^{\perp}=\sigma_{\alpha}(\tilde{h}^{(i)})$, and thus we obtain the lattices given in Figure~\ref{F-Lattices2}.
They are dual to those in Figure~\ref{F-Lattices1}.
\begin{figure}[ht]
\centering
  \mbox{}\hspace*{1cm}
  \includegraphics[height=5cm]{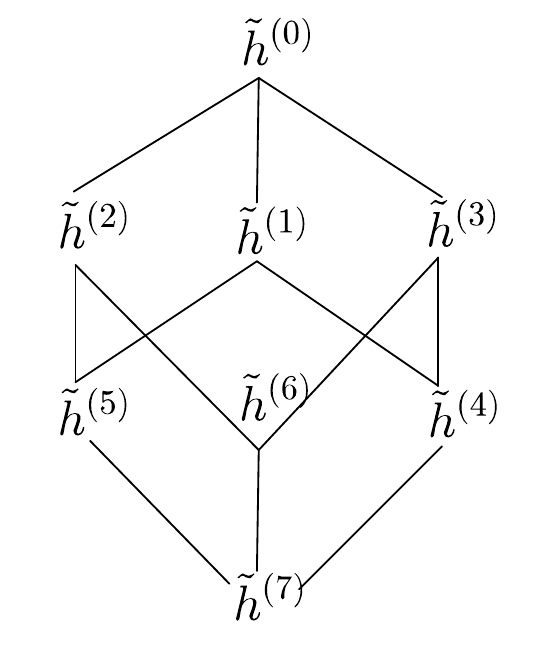}
         \hspace*{2cm}\includegraphics[height=5cm]{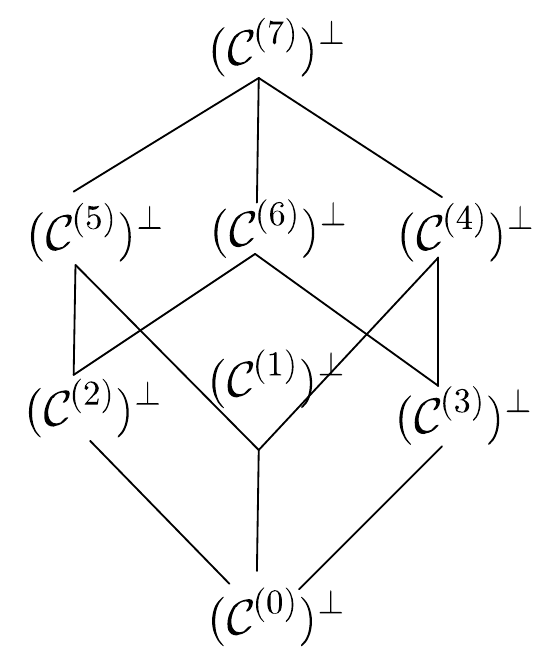}
    \caption{Lattice of monic right divisors of $x^7+\alpha^{-1}$ and the corresponding codes}
    \label{F-Lattices2}
\end{figure}
\end{exa}

We now turn to the notion of a check polynomial for skew-constacyclic codes.
\begin{prop}\label{P-ParityCheck}
Let $x^n-a=hg$ and $c=\gamma(a,g)$.
Then the map
\[
   \psi:\cS_a\longrightarrow \cS_{\theta^{-n}(c)},\ \coset{f}\longmapsto \coset{f\theta^{-n}(h)}
\]
is a well-defined $\cR$-module homomorphism with $\ker\psi=\lideal{\coset{g}}$.
\end{prop}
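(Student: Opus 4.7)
The plan is to base everything on the identity $x^n-\theta^{-n}(c)=g\,\theta^{-n}(h)$, which is exactly Theorem~\ref{T-factors}(3). This one equation encodes both the well-definedness and the kernel computation.

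First I would check that $\psi$ is well-defined, i.e.\ that right multiplication by $\theta^{-n}(h)$ carries the left ideal $\cR(x^n-a)$ into the left ideal $\cR(x^n-\theta^{-n}(c))$. For this it suffices to show that $(x^n-a)\theta^{-n}(h)\in\cR(x^n-\theta^{-n}(c))$. Using $x^n-a=hg$ and then Theorem~\ref{T-factors}(3), I would compute
\[
   (x^n-a)\theta^{-n}(h)=hg\,\theta^{-n}(h)=h\bigl(x^n-\theta^{-n}(c)\bigr),
\]
which lies in $\cR(x^n-\theta^{-n}(c))$ as required. Additivity of $\psi$ is immediate, and the $\cR$-linearity
$\psi(t\coset{f})=\coset{tf\theta^{-n}(h)}=t\psi(\coset{f})$
follows directly from the definition of the module structure on cosets.

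For the kernel, I would argue both inclusions. If $\coset{f}\in\lideal{\coset{g}}$, write $f=sg+r(x^n-a)$ for some $s,r\in\cR$; then
\[
   f\theta^{-n}(h)=sg\,\theta^{-n}(h)+r(x^n-a)\theta^{-n}(h)=s\bigl(x^n-\theta^{-n}(c)\bigr)+rh\bigl(x^n-\theta^{-n}(c)\bigr),
\]
so $\coset{f\theta^{-n}(h)}=0$ in $\cS_{\theta^{-n}(c)}$. Conversely, if $\psi(\coset{f})=0$, then $f\theta^{-n}(h)=s(x^n-\theta^{-n}(c))=sg\,\theta^{-n}(h)$ for some $s\in\cR$. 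Since $\cR$ is a domain (Remark~\ref{R-PropR}) and $\theta^{-n}(h)\neq0$, right cancellation yields $f=sg$, hence $\coset{f}=s\coset{g}\in\lideal{\coset{g}}$.

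I do not anticipate a genuine obstacle here; the only subtlety is keeping track of which quotient each coset lives in and remembering that $c=\gamma(a,g)$ is exactly the constant needed so that $g\,\theta^{-n}(h)=x^n-\theta^{-n}(c)$ via Theorem~\ref{T-factors}(3). Everything else is bookkeeping and right cancellation in the domain $\cR$.
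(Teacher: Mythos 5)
Your proof is correct and follows essentially the same route as the paper's: both rest on the identity $x^n-\theta^{-n}(c)=g\,\theta^{-n}(h)$ from Theorem~\ref{T-factors}(3) for well-definedness and the inclusion $\ker\psi\supseteq\lideal{\coset{g}}$, and both use right cancellation in the domain $\cR$ for the reverse inclusion. Your write-up merely spells out the computations that the paper leaves implicit.
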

\begin{proof}
Well-definedness and the containment $\ker\psi\supseteq\lideal{\coset{g}}$ follow from Theorem~\ref{T-factors}(3),
and $\cR$-linearity is clear.
For $\ker\psi\subseteq\lideal{\coset{g}}$ note that $f\theta^{-n}(h)=t(x^n-\theta^{-n}(c))$ for some $t\in\cR$ implies
$f\theta^{-n}(h)=tg\theta^{-n}(h)$ and thus $f\in\lideal{g}$ by right cancellation in~$\cR$.
\end{proof}

The last result justifies to call $\theta^{-n}(h)$ the \emph{check polynomial} of the code
$\cC=\vva(\lideal{\coset{g}})$.
The only thing to keep in mind that the check equation is carried out modulo~$x^n-\theta^{-n}(c)$.
This generalizes \cite[Lem.~8]{BoUl09} (see also~\cite[Thm.~2.1(iii)]{GSF13}), where a central polynomial $x^n-1$
is considered.
In that case $\theta^n$ is the identity on~$\cR$ and thus $\theta^{-n}(h)=h$.
In particular, all of this generalizes the classical commutative case where~$h$ is the check polynomial of~$\cC$~\cite[Ch.~7, \S4]{MS77}.

\medskip
We close with a brief summary of the central case.
The results bear some resemblance with those obtained for cyclic convolutional codes in~\cite{GS04}; see especially Theorem~7.5 therein.
The last part of~(4) appears already in~\cite[Cor.~1]{Mat10} by Matsuoka,
where even skew-polynomial rings over arbitrary finite rings are considered.

\begin{theo}\label{T-twosided}
Let~$n$ be such that $\theta^n=\mbox{\rm id}_{\cR}$ and consider $x^n-a$ for some $a\in\Fix_{\F}(\theta)$, hence $x^n-a$ is central.
Suppose $x^n-a=hg$.
Then
\begin{arabiclist}
\item $\CircMatat$ induces an injective ring homomorphism from~$\cS_a$ into $\F^{n\times n}$.
\item $x^n-a=gh$.
\item $\CircMatat(\coset{g})\CircMatat(\coset{h})=\CircMatat(\coset{h})\CircMatat(\coset{g})=0$.
\item We have left $\cR$-module homomorphisms
      \[
         \psi_h:\cS_a\longrightarrow\cS_a,\ \coset{f}\longmapsto \coset{fh}\ \text{ and }\
         \psi_g:\cS_a\longrightarrow\cS_a,\ \coset{f}\longmapsto \coset{fg}.
      \]
      Moreover, $\ker\psi_h=\lideal{\coset{g}}=\mbox{\rm ann}_l(\rideal{\coset{h}})$,
      the left annihilator of the right ideal generated by~$h$.
      In the same way, $\ker\psi_g=\lideal{\coset{h}}=\mbox{\rm ann}_l(\rideal{\coset{g}})$.
      In this sense~$h$ is the check polynomial of the code $\cC=\vva(\lideal{\coset{g}})$.
\item We have right $\cR$-module homomorphisms
      \[
         \psi'_h:\cS_a\longrightarrow\cS_a,\ \coset{f}\longmapsto \coset{hf}\ \text{ and }\
         \psi'_g:\cS_a\longrightarrow\cS_a,\ \coset{f}\longmapsto \coset{gf},
      \]
      and $\ker\psi'_h=\rideal{\coset{g}}=\mbox{\rm ann}_r(\lideal{\coset{h}})$,
      the right annihilator of the left ideal generated by~$h$, and
      $\ker\psi'_g=\rideal{\coset{h}}=\mbox{\rm ann}_r(\lideal{\coset{g}})$.
\item Let $\cC=\vva(\lideal{\coset{g}})$ and $h=\sum_{i=0}^k h_i x^i$.
      Then
      \[
         \cC^{\perp}=\vv_{a^{-1}}(\lideal{\coset{\rho_l(h)}}),
         \text{ where }\rho_l(h)=h_k+\theta(h_{k-1})x+\ldots+\theta^k(h_0)x^k.
      \]
\end{arabiclist}
\end{theo}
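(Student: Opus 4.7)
The plan is to dispose of each of the six parts in turn, with the central hypothesis (two-sidedness of $\cR(x^n-a)$, i.e., $\cS_a$ is a ring, together with $\theta^n=\text{id}_\cR$) collapsing most of the twisting that earlier sections had to track. First, (1) follows at once from Theorem~\ref{T-centralCirc} combined with the injectivity recorded in Remark~\ref{R-PropCircMat}(a). For (2), I will invoke Theorem~\ref{T-factors}(2): our hypotheses force $c=\gamma(a,g)=ag_0^{-1}\theta^n(g_0)=a$ and $\theta^n(g)=g$, so the identity $x^n-c=\theta^n(g)h$ becomes $x^n-a=gh$. For (3), I then apply (1) to the factorizations $x^n-a=hg=gh$ together with $\CircMatat(\coset{x^n-a})=0$.

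For (4), well-definedness of $\psi_h$ follows from $\cR(x^n-a)h\subseteq\cR(x^n-a)$, which is exactly two-sidedness; left $\cR$-linearity is immediate from $(tf)h=t(fh)$. For the kernel, I use part (2) to rewrite $\cR(x^n-a)=\cR gh$, so $\coset{fh}=0$ amounts to $fh=sgh$ for some $s\in\cR$, which by right cancellation in the domain $\cR$ is equivalent to $f=sg$, i.e.\ $\coset{f}\in\lideal{\coset{g}}$. For the annihilator identification, note that in the ring $\cS_a$ one has $\rideal{\coset{h}}=\{\coset{ht}:t\in\cR\}$, so $\coset{f}$ left-annihilates this ideal iff $\coset{fht}=0$ for every~$t$; taking $t=1$ on one side, and using $\coset{fht}=\coset{fh}\coset{t}$ on the other, shows this is equivalent to $\coset{fh}=0$. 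Swapping the roles of $g$ and~$h$ (justified by (2)) yields the corresponding statements for $\psi_g$. Part (5) is handled by the mirror argument, now using $\cR(x^n-a)=(x^n-a)\cR$ and left cancellation; the only care needed is to work with the natural right $\cR$-module structure on~$\cS_a$. Finally, (6) is an immediate specialization of Theorem~\ref{T-DualCode}: since $\theta^n=\text{id}_\cR$, the polynomial $\widehat{h}^{^l}=\rho_l(\theta^{-n}(h))$ reduces to $\rho_l(h)$, and the explicit coefficient formula comes straight from~\eqref{e-rholrf}.

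No single step is a serious obstacle; the real care will lie in bookkeeping. One must keep the left and right module structures clearly distinguished, remember that $\lideal{\coset{g}}$, $\rideal{\coset{g}}$, and the two annihilators are \emph{a priori} distinct submodules, and repeatedly exploit the fact that centrality of $x^n-a$ (together with $\theta^n=\text{id}_\cR$) is precisely what allows these objects to coincide and what makes the various cancellations valid.
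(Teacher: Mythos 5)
Your proposal is correct and follows essentially the same route as the paper: (1) from Theorem~\ref{T-centralCirc}, (2) from Theorem~\ref{T-factors} with $c=\gamma(a,g)=a$, (3) from the ring-homomorphism property applied to $hg=gh=x^n-a$, (4) by the same cancellation argument that underlies Proposition~\ref{P-ParityCheck} (which the paper simply cites), (5) by the mirror argument, and (6) by specializing the dual-code theorem with $\widehat{h}^{^l}=\rho_l(h)$. The only difference is cosmetic: you unwind the kernel computation and the annihilator identification explicitly, whereas the paper leaves the latter to the reader.
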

One may regard~(5) and~(6) as the counterpart to~(4) in terms of ideals.

\begin{proof}
(1) is in Theorem~\ref{T-centralCirc}.
(2) follows from Theorem~\ref{T-factors} because $\gamma(a,g)=a$ for all right divisors~$g$ of~$x^n-a$.
(3) is a consequence of~(1) and~(2).
(4) is a special case of Proposition~\ref{P-ParityCheck}, and~(5) follows by symmetry.
(6) is a special case of Theorem~\ref{T-CodesDual}.
\end{proof}

In this context it is worth pointing out that if $x^n-a$ is central and $x^n-a=hg$ then~$g$ and~$h$ need not even be two-sided:
for instance, in $\F_4[x;\theta]$ with~$\theta$ being the Frobenius homomorphism, we have the identity
$x^4-1=(x^2+\alpha x+\alpha^2)(x^2+\alpha x+\alpha)$, and neither factor is two-sided.
Furthermore, if $x^n-a$ is a product of three or more factors, the factors do not commute arbitrarily.
This can be seen with $x^6-1=(x+1)(\alpha^2x^2+1)(\alpha x^3+\alpha x^2+x+1)\neq(\alpha x^3+\alpha x^2+x+1)(\alpha^2x^2+1)(x+1)$
in~$\F_4[x;\theta]$.
It is well known that every two-sided element can be factored into a product of two-sided maximal elements, and in this case
the factors commute~\cite[Sec.~1.2]{Jac96}.
Further information about the case where $a=1$ and $x^n-1$ is central can be found in~\cite{GSF13}.

\bibliographystyle{abbrv}

\begin{thebibliography}{10}

\bibitem{AGAS10}
T.~Abualrub, A.~Ghrayeb, N.~Aydin, and I.~Siap.
\newblock On the construction of skew-quasi-cycic codes.
\newblock {\em IEEE Trans. Inform. Theory}, IT-56:2081--2090, 2010.

\bibitem{BGGU10}
D.~Boucher, P.~Gaborit, W.~Geiselmann, O.~Ruatta, and F.~Ulmer.
\newblock Key exchange and encryption schemes based on non-commutative skew
  polynomials.
\newblock {\em Proc. PQCrypto}, 6061:126--141, 2010.

\bibitem{BGU07}
D.~Boucher, W.~Geiselmann, and F.~Ulmer.
\newblock Skew-cyclic codes.
\newblock {\em AAECC}, 18:379--389, 2007.

\bibitem{BoUl09a}
D.~Boucher and F.~Ulmer.
\newblock Codes as modules over skew polynomial rings.
\newblock In M.~G. Parker, editor, {\em Cryptography and Coding. 12th {IMA}
  International Conference. Lecture Notes in Computer Science 5921}, pages
  38--55, 2009.

\bibitem{BoUl09}
D.~Boucher and F.~Ulmer.
\newblock Coding with skew polynomial rings.
\newblock {\em J. Symb. Comput.}, 44:1644--1656, 2009.

\bibitem{BoUl11}
D.~Boucher and F.~Ulmer.
\newblock A note on the dual codes of module skew codes.
\newblock In L.~Chen, editor, {\em Proc. of Cryptography and coding: 13th IMA
  international conference, IMACC 2011, Oxford, UK}, pages 230--243, 2011.

\bibitem{BoUl14}
D.~Boucher and F.~Ulmer.
\newblock Linear codes using skew polynomials with automorphisms and
  derivations.
\newblock {\em Des. Codes Cryptogr.}, 70:405--431, 2014.

\bibitem{BoUl14a}
D.~Boucher and F.~Ulmer.
\newblock Self-dual skew codes and factorizations of skew polynomials.
\newblock {\em J. Symb. Comput.}, 60:47--61, 2014.

\bibitem{CaBo12}
X.~Caruso and J.~{Le~Borgne}.
\newblock Some algorithms for skew polynomials over finite fields.
\newblock Preprint 2012. arXiv: 1212.3582.

\bibitem{CLU09}
L.~Chaussade, P.~Loidreau, and F.~Ulmer.
\newblock Skew codes of prescribed distance or rank.
\newblock {\em Des. Codes Cryptogr.}, 50:267--284, 2009.

\bibitem{GSF13}
J.~Gao, L.~Shen, and F.-W. Fu.
\newblock Skew generalized quasi-cyclic codes over finite fields.
\newblock Preprint 2013. ArXiv: 1309.1621v1.

\bibitem{Gie98}
M.~Giesbrecht.
\newblock Factoring in skew-polynomial rings over finite fields.
\newblock {\em J. Symb. Comput.}, 26:463--486, 1998.

\bibitem{GS04}
H.~Gluesing-Luerssen and W.~Schmale.
\newblock On cyclic convolutional codes.
\newblock {\em Acta Applicandae Mathematicae}, 82:183--237, 2004.

\bibitem{Hon01}
T.~Honold.
\newblock Characterization of finite {F}robenius rings.
\newblock {\em Arch. Math.}, 76:406--415, 2001.

\bibitem{Jac96}
N.~Jacobson.
\newblock {\em Finite Dimensional Division Algebra over Fields}.
\newblock Springer, New York, 1996.

\bibitem{LaLe88}
T.~Y. Lam and A.~Leroy.
\newblock Vandermonde and {W}ronskian matrices over division rings.
\newblock {\em J. Algebra}, 119:308--336, 1988.

\bibitem{LMK14}
S.~Liu, F.~Manganiello, and F.~R. Kschischang.
\newblock K{\"o}tter interpolation in skew polynomial rings.
\newblock {\em Des. Codes Cryptogr.}, 72:593--608, 2014.

\bibitem{MS77}
F.~J. Mac{W}illiams and N.~J.~A. Sloane.
\newblock {\em The Theory of Error-Correcting Codes}.
\newblock North-Holland, 1977.

\bibitem{Mat10}
M.~Matsuoka.
\newblock Mathematical aspects of \mbox{$(\theta,\delta)$}-codes with
  skew-polynomial rings.
\newblock {\em Int. Math. Forum}, 5:3203--3210, 2010.

\bibitem{McD74}
B.~R. Mc{D}onald.
\newblock {\em Finite Rings with Identity}.
\newblock Marcel Dekker, New York, 1974.

\bibitem{Ore33}
O.~Ore.
\newblock Theory of non-commutative polynomials.
\newblock {\em Annals Math.}, 34:480--508, 1933.

\bibitem{SiBo14}
V.~Sidorenko and M.~Bossert.
\newblock Fast skew-feedback shift-register synthesis.
\newblock {\em Des. Codes Cryptogr.}, 70:55--67, 2014.

\bibitem{SJB11}
V.~Sidorenko, L.~Jiang, and M.~Bossert.
\newblock Skew-feedback shift-register synthesis and decoding interleaved
  {G}abidulin codes.
\newblock {\em IEEE Trans. Inform. Theory}, IT-57:621--632, 2011.

\bibitem{Wu14}
B.~Wu.
\newblock New classes of quadratic bent functions in polynomial forms.
\newblock In {\em 2014 IEEE International Symposium on Information Theory
  (ISIT)}, pages 1832--1836, 2014.

\bibitem{Zha10}
Y.~Zhang.
\newblock A secret sharing scheme via skew polynomials.
\newblock In {\em Proceedings of the 2010 International Conference on
  Computational Science and its applications (ICCSA '10)}, pages 33--38, IEEE
  Computer Society, Washington (DC), 2010.

\end{thebibliography}


\end{document}